\definecolor{ForestGreen}{rgb}{0.1333,0.5451,0.1333}
\definecolor{DarkRed}{rgb}{0.8,0,0}
\definecolor{Red}{rgb}{0.9,0,0}
\par\vspace{4mm}}
\setlist[itemize]{noitemsep,nolistsep} 
\setlist[enumerate]{noitemsep,nolistsep} 
\renewcommand{\paragraph}{%
  \@startsection{paragraph}{4}%
  {\z@}{1ex \@plus 1ex \@minus .2ex}{-1em}%
  {\normalfont\normalsize\bfseries}%
}
\def\thmt@refnamewithcomma #1#2#3,#4,#5\@nil{%
  \@xa\def\csname\thmt@envname #1utorefname\endcsname{#3}%
  \ifcsname #2refname\endcsname
    \csname #2refname\expandafter\endcsname\expandafter{\thmt@envname}{#3}{#4}%
  \fi
}
\declaretheorem[numberwithin=section,refname={Theorem,Theorems},Refname={Theorem,Theorems}]{theorem}
\declaretheorem[numberlike=theorem,refname={Lemma,Lemmas},Refname={Lemma,Lemmas}]{lemma}
\declaretheorem[numberlike=theorem,refname={Corollary,Corollaries},Refname={Corollary,Corollaries}]{corollary}
\declaretheorem[numberlike=theorem,refname={property,properties},Refname={Property,Properties}]{property}
\declaretheorem[numberlike=theorem,refname={Claim, Claims},Refname={Claim, Claims}]{claim}
\declaretheorem[numberlike=theorem]{definition}
\newtheorem{invariant}[theorem]{Invariant}
\newcommand{\ceil}[1]{\ensuremath{\left\lceil#1\right\rceil}}
\renewcommand{\phi}{\varphi}
\newcommand{\poly}{\operatorname{poly}}
\renewcommand{\t}{\text{slack}}
\newcommand{\eps}{\delta}
\newcommand{\G}{\mathcal{G}}
\def\danupon#1{\marginpar{$\leftarrow$\fbox{D}}\footnote{$\Rightarrow$~{\sf #1 --Danupon}}}
\def\sayan#1{\marginpar{$\leftarrow$\fbox{S}}\footnote{$\Rightarrow$~{\sf #1 --Sayan}}}
\def\monika#1{\marginpar{$\leftarrow$\fbox{M}}\footnote{$\Rightarrow$~{\sf #1 --Monika}}}
\def\danupon#1{}
\def\sayan#1{}
\def\monika#1{}
\newcommand{\shortOnly}[1]{\ifthenelse{\boolean{short}}{#1}{}}
\newcommand{\longOnly}[1]{\ifthenelse{\boolean{short}}{}{#1}}
\title{Deterministic Fully Dynamic Approximate Vertex Cover and Fractional Matching in $O(1)$ Amortized  Update Time}
\date{}
\author{Sayan Bhattacharya\thanks{Institute of Mathematical Sciences, Chennai, India. Email: {\tt bsayan@imsc.res.in}}  \and Deeparnab Chakrabarty\thanks{Micorsoft Research, India. Email: {\tt dechakr@microsoft.com}} \and Monika Henzinger\thanks{University of Vienna, Austria. Email: {\tt monika.henzinger@univie.ac.at}.The research leading to these results has received funding from the European Research Council under the European Union's Seventh Framework Programme (FP/2007-2013) / ERC Grant Agreement no. 340506.}}
\begin{document}

\setcounter{tocdepth}{3}

\begin{titlepage}
\maketitle
\pagenumbering{roman}

\begin{abstract}
We consider the problems of maintaining an approximate maximum matching and an approximate minimum vertex cover in a dynamic graph undergoing a sequence of edge insertions/deletions. Starting with the seminal work of Onak and Rubinfeld [STOC 2010], this problem has received significant attention in recent years. Very recently, extending the framework of Baswana, Gupta and Sen [FOCS 2011], Solomon [FOCS 2016] gave a randomized dynamic algorithm for this problem that has an approximation ratio of $2$ and an amortised update time of $O(1)$ with high probability. This algorithm requires the assumption of an {\em oblivious adversary}, meaning that the future sequence of edge insertions/deletions in the graph cannot depend in any way on the algorithm's past output.  A natural way to remove the assumption on oblivious adversary is to give a deterministic dynamic algorithm for the same problem in $O(1)$ update time. In this paper, we resolve this question. 

We present a new {\em deterministic} fully dynamic algorithm that  maintains a $O(1)$-approximate minimum vertex cover and maximum fractional matching, with an amortised update time of $O(1)$. Previously, the best  deterministic algorithm for this problem was due to Bhattacharya, Henzinger and Italiano [SODA 2015]; it had an approximation ratio of $(2+\epsilon)$ and an amortised update time of $O(\log n/\epsilon^2)$. 
Our results also extend to a fully dynamic $O(f^3)$-approximate algorithm with $O(f^2)$ amortized update time for the hypergraph vertex cover and fractional hypergraph matching problem where every hyperedge has at most $f$ vertices.
\end{abstract}

\newpage
\pagenumbering{gobble}
\clearpage

\setcounter{tocdepth}{3}


\end{titlepage}

\newpage

\pagenumbering{arabic}

\newcommand{\DD}{{\sc Down-Dirty }}
\newcommand{\UD}{{\sc Up-Dirty }}
\newcommand{\Dplus}{{\sc $+$-Dirty }}
\newcommand{\Clean}{{\sc Super-Clean} }

\section{Introduction}
\label{mainbody:sec:intro}
Computing a maximum cardinality matching is a fundamental problem in computer science with applications, for example, in operations research, computer science, and computational chemistry. In many of these applications the underlying graph can change. Thus, it is natural to ask how quickly a maximum matching can be maintained after a change in the graph. 
As nodes usually change less frequently than edges, dynamic matching algorithms usually study the problem where edges are inserted and deleted, which is called the {\em (fully) dynamic matching problem}\footnote{Node updates are usually handled through the insertion and deletion of
isolated nodes, but there has been also some work on the node insertions-only or node deletions-only problem~\cite{BosekLSZ14}.}. 
The goal of a dynamic matching algorithm is to maintain either an actual matching (called the {\em matching version}) or the {\em value} of the matching 
(called the {\em value version}) as efficiently as possible. 

Unfortunately, the problem of maintaining even just the value of the maximum cardinality matching is hard: There is a conditional lower bound that
shows that no (deterministic or randomized) algorithm can achieve at the same time an amortized update time of $O(m^{1/2-\epsilon})$ and a query (for the size of the matching) time of $O(m^{1-\epsilon})$ for any small $\epsilon >0$~\cite{HenzingerKNS15}
(see~\cite{AbboudW14} for conditional lower bounds using different assumptions). The best upper bound is Sankowski's randomized algorithm~\cite{Sankowski07} that solves the value problem in time $O(n^{1.495})$ per update and $O(1)$ per query. Thus, it is natural to study the dynamic {\em approximate} maximum matching problem, and there has been a large body~\cite{OnakR10,BaswanaGS11,NeimanS13, GuptaP13, BhattacharyaHI15s,BhattacharyaHN16,Solomon16} of work on it and its dual, the approximate vertex cover problem, in the last few years. 

Dynamic algorithms can be further classified  into two types: Algorithms that require an {\em oblivious} (aka {\em non-adaptive}) adversary, i.e., an adversary that does {\em not} base future updates and queries on the answers 
to past queries, and algorithms that work even for an adaptive adersary. Obviously, the earlier kind of algorithms are less general than the later. Unfortunately, all randomized dynamic approximate matching and vertex cover algorithms so 
far either are not known to work with an adaptive adversary~\cite{OnakR10} or do not work for
an adaptive adversary~\cite{BaswanaGS11,Solomon16}. Solomon~\cite{Solomon16} gives the best such randomized algorithm: It achieves $O(1)$ amortized update time  (with high probability) and $O(1)$ query time for maintaining a $2$-approximate maximum matching and a $2$-approximate minimum vertex cover. He also extends this result to the dynamic distributed setting (\`a la Parter, Peleg, and Solomon~\cite{ParterPS16}) with the same approximation ratio and update cost. 

In this paper we present the first {\em deterministic} algorithm that maintains an $O(1)$-approximation to the size of the maximum matching in
$O(1)$ amortized update time and $O(1)$ query time. We also maintain an $O(1)$-approximate vertex cover in the same update time.
Note that this is the first {\em deterministic} dynamic algorithm with constant update time for any non-trivial dynamic  graph problem. 
This is significant as for other dynamic problems such as the dynamic connectivity problem or the dynamic planarity testing problem there are non-constant lower bounds in the cell probe model on the time per operation~\cite{HenzingerF98,Patrascu16}. Thus, our result shows that no such lower bound can exist for the dynamic approximate matching problem.

There has been prior work on deterministic algorithms for dynamic approximate matching, but they all have  $\Omega(\poly (\log n))$ update time: One line of work concentrated on reducing the approximation ratio as much as possible, or
at least below 2: Neiman and Solomon~\cite{NeimanS13} achieved an  update time $O(\sqrt{m})$ for maintaining a $3/2$-approximate maximum matching and $2$-approximate minimum vertex cover. This result was improved by Gupta and Peng~\cite{GuptaP13} that gave an algorithm with update time $O(\sqrt{m}/\epsilon^2)$ for maintaining a $(1+\epsilon)$-approximate maximum matching.  
Recently, Bernstein and Stein~\cite{BernsteinS16} gave an algorithm with $O(m^{1/4}/\epsilon^2)$ amortised update time for maintaining a $(3/2+\epsilon)$-approximate maximum matching.  Another line of work, and this paper fits in this line, concentrated on getting a constant approximation while reducing the update time to polylogarithmic: Bhattacharya, Henzinger and Italiano~\cite{BhattacharyaHI15s} achieved an $O(\log n/\epsilon^2)$ update time for maintaining a $(2+\epsilon)$-approximate maximum {\em fractional} matching and a $(2+\epsilon)$-approximate minimum vertex cover. Note that any
fractional matching algorithm solves the value version of the dynamic matching problem while degrading the approximation ratio by a factor of $3/2$. Thus, the algorithm in~\cite{BhattacharyaHI15s}  maintains a $(3+\epsilon)$-approximation of the value of the maximum matching. The fractional matching in this algorithm was later ``determinically rounded'' by  Bhattacharya, Henzinger and Nanongkai~\cite{BhattacharyaHN16} to achieve a $O(\text{poly} (\log n, 1/\epsilon))$ update time for maintaining a $(2+\epsilon)$-approximate maximum matching. 

Our method also generalizes to the hypergraph vertex (set) cover and hypergraph fractional matching problem which was considered by ~\cite{BhattacharyaHI15}. In this problem the hyperedges of a hypergraph are inserted and deleted over time. $f$ indicates the maximum cardinality of any hyperedge. The objective is to maintain a hypergraph vertex cover, that is, a set of vertices that hit every hyperedge. Similarly a fractional matching in the hypergraph is a fractional assignment (weights) to the hyperedges so that the total weight faced by any vertex is at most $1$.
We give an $O(f^3)$-approximate algorithm with amortized $O(f^2)$ update time.

\subsection{Our Techniques}
Our algorithm builds and simplifies the framework of hierarchical partitioning of vertices proposed by Onak and Rubinfeld~\cite{OnakR10}, which was later enhanced by Bhattacharya, Henzinger and Italiano~\cite{BhattacharyaHI15s} to give a deterministic fully-dynamic $(2+\eps)$-approximate vertex cover and maximum matching in $O(\log n/\eps^2)$-amortized update time. The hierarchical partition divides the vertices into $O(\log n)$-many levels and maintains a {\em fractional matching} and vertex cover. To prove that the approximation factor is good, Bhattacharya et. al.\cite{BhattacharyaHI15s} also maintain approximate complementary slackness conditions. An edge insertion or deletion can disrupt these conditions (and indeed at times the feasibility of the fractional matching), and a {\em fixing procedure} maintains various invariants. To argue that the update time is bounded,~\cite{BhattacharyaHI15s} give a rather involved potential function argument which proves that the update time bounded by $O(L)$, the number of levels, and is thus $O(\log n)$. It seems unclear whether the update time can be argued to be a constant or not.

Our algorithm is morally similar  to that in Bhattacharya et. al.\cite{BhattacharyaHI15s}, except we are a bit stricter when we fix nodes. As in~\cite{BhattacharyaHI15s}, whenever an edge insertion or deletion or a previous update violates an invariant condition, we move nodes across the partitioning (incurring update costs), but after a node is fixed we often ensure it satisfies a stronger condition than what the invariant requires. For example, suppose a node $v$ violates the upper bound of a fractional matching, that is, the total fractional weight it faces becomes larger than $1$, then the fixing subroutine will at the end ensure that the final weight the node faces is significantly less than $1$. Morally, this slack allows us to make an charging argument of the following form -- if this node violates the upper bound again, then a lot of ``other things'' must have occurred to increase its weight (for instance, maybe edge insertions have occurred). Such a charging argument, essentially, allows us to bypass the $O(\log n)$-update time to an $O(1)$-update time. The flip side of the slack is that our complementary slackness conditions become weak, and therefore instead of a $2+\varepsilon$-approximation we can only ensure an $O(1)$-approximation. The same technique easily generalizes to the hypergraph setting.
It would be interesting to see other scenarios where approximation ratios can be slightly traded in for huge improvements in the update time.

\paragraph{Remark.}  Very recently, and independently of our work, Gupta et al.~\cite{Gupta} obtained a $O(f^3)$-approximation algorithm for maximum fractional matching and minimum vertex cover in a hyper-graph in $O(f^2)$ amortized update time.

\section{Notations and Preliminaries}
\label{mainbody:sec:prelim}

Since the hypergraph result implies the graph result, henceforth we consider the former problem. 
The input hyper-graph $\mathcal{G} = (V, E)$ has $|V| = n$ nodes. Initially, the set of hyper-egdes  is empty, i.e., $E = \emptyset$. Subsequently, an adversary inserts or deletes  hyper-edges in the hyper-graph $\mathcal{G} = (V, E)$. The node-set $V$ remains unchanged with time.  Each hyper-edge  contains at most $f$ nodes. We say that $f$ is the maximum {\em frequency} of a hyper-edge. If a hyper-edge $e$ has a node $v$ as one of its endpoints, then we write $v \in e$. For every node $v \in V$, we let $E_v = \{ e \in E : v \in e \}$ denote the set of hyper-edges that are incident on $v$. 
In this fully dynamic setting, our goal is to maintain an approximate maximum fractional matching and  an approximate minimum vertex cover in $\mathcal{G}$. The main result of this paper is summarized in the theorem below.

\begin{theorem}
\label{th:main}
We can maintain an $O(f^3)$ approximate maximum fractional matching and an $O(f^3)$ approximate minimum vertex cover in the input hyper-graph $\mathcal{G} = (V, E)$ in $O(f^2)$ amortized update time.
\end{theorem}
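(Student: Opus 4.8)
The plan is to build a hierarchical partition of the nodes into $L = O(\log n)$ levels, where the level $\ell(v)$ of a node implicitly determines an edge weight $w(e) = \beta^{-\max_{v\in e}\ell(v)}$ for a fixed parameter $\beta > f$ (so higher-level endpoints mean smaller weights). The fractional matching is given by these weights, and the vertex cover is taken to be the set of nodes at level above some threshold. First I would set up the invariants: every node $v$ should face total weight $W_v = \sum_{e \in E_v} w(e)$ in a band like $[c_1/\beta,\, c_2]$ depending on whether it is ``active'' (a cover candidate) or not, together with a structural invariant that the weight assignment is consistent with the levels. Feasibility of the matching (every $W_v \le 1$) and the covering property (every hyperedge has an endpoint whose level forces it into the cover) follow from the invariants, and comparing the value of the matching to the size of the cover via approximate complementary slackness — degraded by the $f$ factors inherent in hypergraphs (each edge charged to up to $f$ endpoints) and by the deliberate slack — yields the $O(f^3)$ approximation ratio. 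This part is essentially bookkeeping once the invariants are chosen correctly.

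Next I would describe the \textsc{Fix} subroutine. When an edge is inserted or deleted, at most $2f$ nodes have their weight changed, and some may now violate the band. The subroutine repeatedly picks a dirty node and moves it up or down one level at a time, recomputing weights of affected neighbors, until all invariants are restored; crucially, when we fix a node we overshoot — if $v$ violated the upper bound we push it up until $W_v$ is comfortably below the threshold, not just barely below it, and symmetrically for the lower bound. This strict fixing is the one real departure from Bhattacharya–Henzinger–Italiano and is what buys the constant amortized time.

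The heart of the argument — and the step I expect to be the main obstacle — is the potential-function analysis showing $O(f^2)$ amortized update time. I would maintain a global potential $\Phi$ that is a sum of per-node potentials $\Phi(v)$, each a function of $\ell(v)$ and of $W_v$ (or of the ``slack'' $1 - W_v$, suitably rescaled per level so that a unit of work at level $\ell$ is comparable to a unit at level $\ell'$). The goal is to show: (i) an edge insertion or deletion raises $\Phi$ by only $O(f^2)$ (here we use that only $O(f)$ nodes are touched and each weight change is bounded); (ii) each elementary level-move performed by \textsc{Fix} decreases $\Phi$ by at least a constant, which pays for the $O(f)$ neighbor-weight recomputations that move triggers. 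The subtlety is that moving $v$ changes the weights, hence the potentials, of up to $f$ neighbors, and those changes must be dominated by the drop in $\Phi(v)$; this is exactly where the overshoot matters — because $v$ lands far inside its band, it cannot be made dirty again cheaply, so the ``other things'' (edge updates or neighbor moves) needed to re-dirty it can absorb the cost. I would handle up-moves and down-moves with separate (possibly asymmetric) potential terms, since an up-move of $v$ decreases neighbors' weights while a down-move increases them, and carefully verify the amortized inequality in each case. Summing over the update sequence and using $\Phi \ge 0$ gives the bound. Finally, the same construction and analysis go through verbatim with $f = 2$ to recover the $O(1)$-approximation, $O(1)$-update-time result for ordinary graphs stated in the abstract.
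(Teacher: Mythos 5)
Your setup of the algorithm is essentially the one the paper uses: a level hierarchy with edge weights $\beta^{-\ell(e)}$, a band invariant on each $W_v$, a cover consisting of the positive-level nodes, and fixing subroutines that deliberately overshoot so the node lands well inside its band. The approximation-ratio discussion is also in the right spirit. Where the proposal goes wrong is in the analysis, and the error is not cosmetic.

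The paper explicitly abandons the potential-function route (it notes that the potential-function analysis of Bhattacharya--Henzinger--Italiano appears stuck at $O(\log n)$) and instead uses a direct \emph{charging} argument via counters $C^{up}, C^{down}, I^{down}$ and a decomposition of each node's history into epochs, up-/down-jumps, and phases (maximal runs of consecutive up-jumps). The crux is Lemma~\ref{lm:counter:up}: within a single phase $\Phi_v$ in which $v$ rises through levels $i_1<\dots<i_k$, the total work of the up-jumps is bounded in terms of the edge insertions and down-jumps of neighbors that occurred during that phase, with the final up-jump $\Delta^{up}_k$ handled by a separate, delicate argument (Claim~\ref{cl:A:500}) that uses the super-clean up-weight bound $W^+_v\le 1/\beta^2$ at the start of the phase and a global bound $W_v<2$. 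Your proposal has no analogue of this phase-level bookkeeping, and that is where the constant amortized bound actually comes from.

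Concretely, your step (ii) is false as stated. An elementary level-move of $v$ does not trigger $O(f)$ work: moving $v$ from level $i$ to $i\pm1$ changes the level (hence weight) of every hyper-edge $e\in E_v$ with $\ell_v(e)\le i$, and for each such edge we touch its $\le f$ endpoints, so the cost is $\Theta(f\cdot|\{e\in E_v:\ell_v(e)\le i\}|)$ --- potentially $\Theta(f\cdot\deg(v))$, not $O(f)$. A per-node potential that drops by ``at least a constant'' per level-move therefore cannot pay for a level-move. You would need the potential drop on $v$'s move to scale with the number of affected incident edges, and at the same time you would need an edge insertion/deletion to raise the total potential by only $O(f^2)$, and you would need a neighbor's move to not inflate $v$'s potential more than it pays for. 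Reconciling all three simultaneously is exactly the obstacle that motivated the paper to replace the potential function with the epoch/phase charging scheme: the ``slack'' after a FIX call is used not as a potential credit but to argue directly that a large number of edge events (insertions for up-jumps, deletions or neighbor up-jumps for down-jumps) must occur before the node can become dirty again, and then these events are apportioned carefully within each phase. Without a concrete potential and a proof of your inequalities (i)--(ii) in the corrected cost model, the proposal does not establish the $O(f^2)$ amortized bound.

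A secondary (fixable) issue: you move a dirty node one level at a time and ``recompute weights of affected neighbors'' at each intermediate level. If interpreted as updating data structures at every intermediate level, this is wasteful; the paper instead locates the target level $j$ first (in $\Theta(|j-i|)$ time, via the stored counters $|E_{v,k}|$, Claim~\ref{ckm:hillary}) and then updates data structures once. Accounting for this index-search cost is itself nontrivial and is charged against $I^{down}$ in the paper's Lemma~\ref{lm:counter:time}; your proposal does not address it.
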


To prove Theorem~\ref{th:main}, throughout the rest of this paper we fix  two parameters  $\alpha$ and $\beta$ as follows.
\begin{equation}
\label{eq:parameter}
\beta = 17, \text{ and } \alpha = 1 + 36 f^2 \beta^2.
\end{equation}

We will   maintain a hierarchical partition of the node-set $V$ into $L +1$ levels $\{0, \ldots, L\}$, where $L = \ceil{f \cdot \log_{\beta} n}+1$. We let $\ell(v) \in \{0, \ldots, L\}$ denote the {\em level} of a node $v \in V$. We define the {\em level} of a hyper-edge $e \in E$ to be  the maximum level among its endpoints, i.e., $\ell(e) = \max_{v \in e}\ell(v)$.  
The levels of nodes (and therefore hyper-edges) induce the following weights on hyper-edges: $w(e) := \beta^{-\ell(e)}$ for every hyper-edge $e \in E$. 
For all nodes $v \in V$, let $W_v := \sum_{e \in E_v} w(e)$ be the total weight received by $v$ from its incident hyper-edges.  We will satisfy the following invariant after  processing a hyper-edge insertion or deletion.

 \begin{invariant}
 \label{mainbody:inv:node}
 Every node $v \in V$ at level $\ell(v) > 0$ has weight $1/(\alpha \beta^2) < W_v < 1$. Every node $v \in V$ at level $\ell(v) = 0$ has weight $0 \leq W_v \leq 1/\beta^2$. 
 \end{invariant}

 \begin{corollary}
 \label{mainbody:cor:edge}
Under Invariant~\ref{mainbody:inv:node}, the nodes in levels $\{1, \ldots, L\}$ form a vertex cover in $\mathcal{G}$.
 \end{corollary}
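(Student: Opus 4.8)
The plan is to argue by contradiction, exploiting the fact that an uncovered hyper-edge would be forced to have weight exactly $1$, which is incompatible with the level-$0$ weight bound in Invariant~\ref{mainbody:inv:node}. Concretely, suppose some hyper-edge $e \in E$ is not covered by the nodes in levels $\{1,\ldots,L\}$; equivalently, every endpoint $v \in e$ has $\ell(v) = 0$. Then by definition $\ell(e) = \max_{v \in e} \ell(v) = 0$, and hence $w(e) = \beta^{-\ell(e)} = \beta^{0} = 1$. Now pick any endpoint $v \in e$ (there is at least one, since hyper-edges are non-empty). Since $e \in E_v$ and all weights are non-negative, $W_v = \sum_{e' \in E_v} w(e') \geq w(e) = 1$. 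But $v$ lies at level $0$, so Invariant~\ref{mainbody:inv:node} gives $W_v \leq 1/\beta^2$; as $\beta = 17 > 1$ we have $1/\beta^2 < 1$, contradicting $W_v \geq 1$.

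Therefore no such $e$ exists: every hyper-edge of $\mathcal{G}$ has an endpoint at some level in $\{1,\ldots,L\}$, which is exactly the assertion that those nodes form a vertex cover. I do not expect any real obstacle here — the entire argument reduces to the single observation that a level-$0$ hyper-edge carries weight $1$ while the Invariant caps the weight at a level-$0$ node strictly below $1$. The only minor point worth a line is to note that we work with the current edge set $E$ and that hyper-graph edges are non-empty, so the chosen endpoint $v$ always exists.
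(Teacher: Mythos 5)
Your argument is correct and coincides with the paper's own proof: both proceed by contradiction from an uncovered hyper-edge forcing $w(e)=1$ and hence $W_v \geq 1$ at a level-$0$ endpoint, which Invariant~\ref{mainbody:inv:node} forbids. Nothing further is needed.
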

 
 \begin{proof}
 Suppose that there is a hyper-edge $e \in E$ with $\ell(v) = 0$ for all $v \in e$. Then we also have $\ell(e) = 0$ and   $w(e) = 1/\beta^{\ell(e)} = 1/\beta^0 = 1$. So for every node $v \in e$, we get: $W_v \geq w(e) = 1$. This violates Invariant~\ref{mainbody:inv:node}. 
 \end{proof}
 \noindent
Invariant~\ref{mainbody:inv:node} ensures that $w(e)$'s form a fractional matching satisfying approximate complementary slackness conditions with the vertex cover defined in Corollary~\ref{mainbody:cor:edge}. This gives the following theorem. 
 \begin{theorem}
 \label{mainbody:th:old:result}
In our algorithm, the hyper-edge weights $\{ w(e) \}$ form a $f\alpha\beta^2$-approximate maximum fractional matching, and the nodes in levels $\{1, \ldots, L \}$ form a $f\alpha\beta^2$-approximate minimum vertex cover.
 \end{theorem}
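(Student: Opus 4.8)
The plan is to deduce both statements from a single inequality comparing the size of the candidate cover $C := \{v \in V : \ell(v) \ge 1\}$ with the total edge weight $w(E) := \sum_{e \in E} w(e)$, and then to close the argument with weak LP duality between fractional matching and vertex cover. Write $M^\star$ for the value of a maximum fractional matching in $\mathcal{G}$ and $\tau^\star$ for the size of a minimum (integral) vertex cover; weak duality gives $M^\star \le \tau^\star$.

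First I would check that $\{w(e)\}$ is itself a feasible fractional matching: by Invariant~\ref{mainbody:inv:node}, every node $v$ satisfies $\sum_{e \in E_v} w(e) = W_v \le 1$ (indeed $W_v < 1$ when $\ell(v) \ge 1$, and $W_v \le 1/\beta^2 < 1$ when $\ell(v) = 0$). Hence $w(E) \le M^\star$. On the cover side, Corollary~\ref{mainbody:cor:edge} says that $C$ is a vertex cover, so $|C| \ge \tau^\star$.

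Now the key counting step. Every $v \in C$ has $\ell(v) \ge 1$, so Invariant~\ref{mainbody:inv:node} gives $W_v > 1/(\alpha\beta^2)$. Summing over $v \in C$ and exchanging the order of summation,
\[
\frac{|C|}{\alpha\beta^2} \;<\; \sum_{v \in C} W_v \;=\; \sum_{v \in C} \sum_{e \in E_v} w(e) \;=\; \sum_{e \in E} w(e)\cdot\bigl|\{v \in e : v \in C\}\bigr| \;\le\; f\cdot w(E),
\]
since each hyper-edge has at most $f$ endpoints. Thus $|C| < f\alpha\beta^2\cdot w(E)$. Chaining the three facts finishes both parts: for the vertex cover, $|C| < f\alpha\beta^2\cdot w(E) \le f\alpha\beta^2\cdot M^\star \le f\alpha\beta^2\cdot \tau^\star$, so $C$ is an $f\alpha\beta^2$-approximate minimum vertex cover; for the fractional matching, $w(E) > |C|/(f\alpha\beta^2) \ge \tau^\star/(f\alpha\beta^2) \ge M^\star/(f\alpha\beta^2)$, so $\{w(e)\}$ is an $f\alpha\beta^2$-approximate maximum fractional matching.

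There is no genuine obstacle in this statement; it is bookkeeping on top of Invariant~\ref{mainbody:inv:node}. The only points needing care are using the lower bound $W_v > 1/(\alpha\beta^2)$ (not the upper bound) on the cover nodes, the multiplicity bound of $f$ per hyper-edge, and keeping the inequality directions straight via the sandwich $w(E) \le M^\star \le \tau^\star \le |C|$. Everything difficult — namely, that Invariant~\ref{mainbody:inv:node} can actually be maintained in $O(f^2)$ amortized update time — is deferred to the rest of the paper.
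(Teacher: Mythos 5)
Your proposal is correct and is essentially a fleshed-out version of the paper's proof sketch: both rely on LP weak duality plus the observation that the level-$\ge 1$ nodes are ``approximately tight'' ($W_v > 1/(\alpha\beta^2)$), with the extra factor $f$ coming from the multiplicity bound that each hyper-edge is counted at most $f$ times in $\sum_{v\in C} W_v$. The paper phrases this via ``approximately maximal'' matchings and appeals to a well-known fact, while you carry out the double-counting inequality explicitly, but the underlying argument is the same.
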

\begin{proof}(Sketch)
Say that a fractional matching, which assigns a weight $w(e)$ to every hyper-edge $e \in E$, is {\em maximal} iff for every hyper-edge $e \in E$ there is some node $v \in e$ such that $W_v = 1$. Let $T = \{ v \in V : W_v = 1\}$ be the set of all {\em tight} nodes in this fractional matching. Clearly, the set of nodes $T$ form a vertex cover in $\G$. It is well known that the sizes of such a fractional matching $\{w(e) \}$ and the corresponding vertex cover $T$ are within a factor $f$ of each other. The key observation is that under Invariant~\ref{mainbody:inv:node}, the fractional matching $\{w(e)\}$ is {\em approximately} maximal, meaning that for every hyper-edge $e \in E$ there is some node $v \in e$ such that $W_v > 1/(\alpha \beta^2)$. Further, the set of nodes in levels $\{1, \ldots, L\}$ are {\em approximately} tight, since each of them has weight at least $1/(\alpha \beta^2)$. 
\end{proof}
We introduce some more notations. 
For any vertex $v$, let  $W_v^+ := \sum_{e \in E_v : \ell(e) > \ell(v)} w(e)$ be the total {\em up-weight} received by $v$, that is, weight from those incident hyper-edges whose  levels are strictly greater than $\ell(v)$. For all levels $i \in \{0,1,\ldots, L\}$, we let $W_{v\to i}$ and $W_{v\to i}^+$ respectively denote the values of $W_v$ and $W_v^+$ {\em if the node $v$ were to go to level $i$ and the levels of all the other nodes were to remain unchanged}. More precisely, for every hyper-edge $e \in E$ and node $v \in e$, we define $\ell_v(e) = \max_{u \in e : u \neq v} \ell(u)$ to be the maximum level among the endpoints of $e$ that are distinct from $v$. Then we have: $W_{v\to i} := \sum_{e \in E_v} \beta^{-\max(\ell_v(e), i)}$ and $W_{v\to i}^+ := \sum_{e \in E_v : \ell_v(e) > i} \beta^{-\ell_v(e)}$. Our algorithm maintains a notion of time such that in each time step the algorithm performs {\em one} elementary operation. We let $W_v(t)$ denote the weight (resp, up-weight) faced by $v$ {\em right before} the operation at time $t$.  Similarly define 
$W_{v\to i}(t), W^+_v(t)$, and $W^+_{v\to i}(t)$.

\paragraph{Different states of a node.}
Before the insertion/deletion of a hyper-edge in $\mathcal{G}$, all nodes satisfy Invariant~\ref{mainbody:inv:node}. When a hyper-edge  is inserted (resp. deleted), it increases (resp. decreases) the weights faced by its endpoints. Accordingly, one or more endpoints can violate Invariant~\ref{mainbody:inv:node} after the insertion/deletion of a hyper-edge. Our algorithm {\em fixes} these nodes 
by changing their levels, which may lead to new violations, and so on and so forth. To describe the algorithm, we need to define certain states of the nodes.
\begin{definition}
	\label{mainbody:def:down:dirty}
	A node $v \in V$ is \DD iff $\ell(v) > 0$ and $W_v \leq 1/(\alpha \beta^2)$. 	A node $v \in V$ is \UD iff either \{$\ell(v) = 0, W_v > 1/\beta^2$\} or \{$\ell(v) > 0, W_v \geq 1$\}.
	A node is {\sc Dirty} if it is either \DD or \UD.
\end{definition}
\noindent
Invariant~\ref{mainbody:inv:node} is satisfied if and only if no node is {\sc Dirty}. We need another definition of \Clean nodes which will be crucial.
 
\begin{definition}
\label{mainbody:def:clean} A node $v \in V$ is \Clean iff one of the following conditions hold: 
 (1) We have $\ell(v) = 0$ and  $W_v \leq 1/\beta^2$, or (2) We have $\ell(v) > 0$,   $1/\beta^2 < W_v \leq 1/\beta$, and $W_v^+ \leq 1/\beta^2$. 
 \end{definition}
 \noindent
 Note that a \Clean node $v$ with $\ell(v) > 0$ has a stronger upper bound on the weight $W_v$ it faces and also an even stronger upper bound on the up-weight $W^+_v$ it faces. At a high level, one of our subroutines will lead to \Clean nodes, and the {\em slack} in the parameters is what precisely allows us to perform an amortized analysis in the update time.

\paragraph{Data Structures.} 
 For all nodes $v \in V$ and levels $i \in \{0,1,\ldots,L\}$, let $E_{v, i} := \{ e \in E_v : \ell(e) = i\}$ denote the set of hyper-edges incident on $v$ that are at level $i$. Note that $E_{v, i} = \emptyset$ for all $i < \ell(v)$.  
 We will maintain the following data structures. 
 (1) For every level $i\in \{0,1,\ldots,L\}$ and node $v \in V$, we store the set of hyper-edges $E_{v, i}$ as a doubly linked list, and also maintain a counter that stores the number of hyper-edges in $E_{v, i}$. (2) For every node $v \in V$, we store the weights $W_v$ and $W_v^+$, its level $\ell(v)$ and an indicator variable for each of the states defined above. (3) For each hyper-edge $e \in E$, we store the values of  its level $\ell(e)$ and therefore its weight $w(e)$. Finally, using appropriate pointers, we ensure that a hyper-edge can be inserted into or deleted from any given linked list in constant time.  
 We now state two lemmas that will be useful in analysing the update time of our algorithm.
 \begin{lemma}
 	\label{lm:time:move:up}
 	Suppose that a node $v$ is currently at level $\ell(v) = i \in [0, L-1]$ and we want to move it  to some level $j \in [i+1, L]$. Then it takes $O(f \cdot |\{ e \in E_v : \ell_v(e) < j\}|)$ time to update the relevant data structures. 
 \end{lemma}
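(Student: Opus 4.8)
The statement to prove is Lemma~\ref{lm:time:move:up}: moving a node $v$ from level $i$ up to level $j$ costs $O(f \cdot |\{e \in E_v : \ell_v(e) < j\}|)$ time. The plan is to account, hyper-edge by hyper-edge, for exactly which data structures get touched when $v$'s level changes from $i$ to $j$, and to observe that the only hyper-edges whose bookkeeping changes are those incident on $v$ whose level is affected by the move. Recall $\ell(e) = \max_{u \in e} \ell(u)$, so for $e \in E_v$ we have $\ell(e) = \max(\ell_v(e), \ell(v))$. When $v$ rises from $i$ to $j$, the level of $e$ changes iff $\max(\ell_v(e), j) \neq \max(\ell_v(e), i)$. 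Since $j > i$, this happens precisely when $\ell_v(e) < j$ (if $\ell_v(e) \ge j > i$ then $\ell(e)$ stays at $\ell_v(e)$; if $\ell_v(e) < j$ then $\ell(e)$ moves from $\max(\ell_v(e),i)$ up to $j$). So the set of hyper-edges requiring any update is exactly $S := \{e \in E_v : \ell_v(e) < j\}$, of size $m := |S|$.

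First I would describe how $v$ is moved. We cannot afford to scan all of $E_v$ (that could be much larger than $m$); instead, using the per-level lists $E_{v,i'}$ and their counters from the data structures, we only need to walk over the lists $E_{v,i'}$ for $i' \le j$ — but we must argue these contain only hyper-edges of $S$ together with possibly some we can detect cheaply. More carefully: for $e \in E_v$ with $\ell(v) = i$, we have $\ell(e) = \max(\ell_v(e), i) \ge i$, and $e \in E_{v,\ell(e)}$. The hyper-edges with $\ell_v(e) \ge j$ all have current level $\ell(e) = \ell_v(e) \ge j$, so they sit in lists $E_{v,i'}$ with $i' \ge j$; we must \emph{not} touch those (there could be many). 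The hyper-edges of $S$ have current level $\ell(e) = \max(\ell_v(e), i) \in [i, j-1]$ (using $j \ge i+1$), hence they are exactly the contents of the lists $E_{v,i}, E_{v,i+1}, \ldots, E_{v,j-1}$ — note $E_{v,i'} = \emptyset$ for $i' < i$. Thus the algorithm iterates over these $j - i$ lists, and the total number of list entries encountered is exactly $m = |S|$ (plus $O(j-i) = O(m)$ overhead for stepping through possibly-empty lists, which is fine unless $j - i$ can exceed $m$; to be safe one notes $E_{v,i}$ is nonempty or the move is trivial, or one simply folds the $O(L)$ into the cost, but the cleaner accounting is $O(m)$ since empty-list checks along the way are charged to the nonempty ones or bounded by $m+1$).

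Next, for each such $e \in S$, I would enumerate the constant-per-endpoint updates. Its new level becomes $j$ and new weight $\beta^{-j}$. We (a) remove $e$ from $E_{v,\ell(e)_{\text{old}}}$ and insert it into $E_{v,j}$, decrementing/incrementing the two counters — $O(1)$ via the stored pointers; (b) for every other endpoint $u \in e$, $u \neq v$ (there are at most $f-1$ of them), remove $e$ from $E_{u,\ell(e)_{\text{old}}}$, insert into $E_{u,j}$, update counters, and adjust $W_u$ and $W_u^+$ by the weight change $\beta^{-j} - \beta^{-\ell(e)_{\text{old}}}$ (and $W_u^+$ only if the comparison of $\ell(e)$ with $\ell(u)$ crosses the threshold, which is an $O(1)$ test); (c) update $\ell(e)$ and $w(e)$ in $e$'s own record. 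Each of these is $O(1)$, and there are $O(f)$ of them per hyper-edge, giving $O(f)$ per hyper-edge in $S$. Finally we set $\ell(v) := j$ and recompute $W_v, W_v^+$ from the affected contributions, $O(m)$ total, and refresh the state indicators of $v$ and of every $u$ adjacent through some $e \in S$ — again $O(f \cdot m)$. Summing: $O(f \cdot m) = O(f \cdot |\{e \in E_v : \ell_v(e) < j\}|)$, as claimed.

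The main obstacle is the bookkeeping subtlety in the first paragraph above: making sure the iteration touches \emph{only} the relevant lists $E_{v,i},\ldots,E_{v,j-1}$ and never the (potentially huge) lists $E_{v,i'}$ for $i' \ge j$, and confirming that these lists' union is exactly $S$ — this hinges on the identity $\ell(e) = \max(\ell_v(e), \ell(v))$ and the sign of $j - i$. Everything else is a routine tally of constant-time pointer operations, with the factor $f$ coming solely from the at-most-$f$ endpoints of each hyper-edge. One should also double-check that updating $W_u^+$ is genuinely $O(1)$: the definition compares $\ell(e)$ against $\ell(u)$, both of which are available, so deciding whether $e$'s contribution enters or leaves the up-weight sum for $u$ is a single comparison — no scan of $E_u$ needed.
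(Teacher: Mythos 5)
Your proof is correct and follows the same approach as the paper's: identify that the only affected hyper-edges are those $e\in E_v$ with $\ell_v(e)<j$ (because $\ell(e)=\max(\ell_v(e),\ell(v))$ and $j>i$), and charge $O(f)$ per affected hyper-edge for its at-most-$f$ endpoints. The paper's proof is essentially the terse version of your argument; your extra care about iterating only over the lists $E_{v,i},\dots,E_{v,j-1}$ and the $O(j-i)$ empty-list overhead is a real subtlety that the paper handles not in this lemma but by absorbing the $\Theta(j-i)$ cost into the separate index-finding accounting (Claim~\ref{ckm:hillary} and the $I^{down}$ counter) in the proof of Lemma~\ref{lm:counter:time}.
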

 
 \begin{proof}
 	If a hyper-edge  $e$ is not incident on the node $v$, then the data structures associated with $e$ are not affected as $v$ moves up from level $i$ to level $j$. Further, among the hyper-edges $e \in E_v$, only the ones with  $\ell_v(e) < j$  get affected (i.e., the data structures associated with them need to be changed) as $v$ moves up from level $i$ to level $j$. Finally, for every hyper-edge that gets affected, we need to spend $O(f)$ time to update the data structures for its $f$ endpoints. 
 \end{proof}
 
 \begin{lemma}
 	\label{lm:time:move:down}
 	Suppose that a node $v$ is currently at level $\ell(v) = i \in [1,L]$ and we want to move it down to some level $j \in [0,i-1]$. Then it takes $O(f \cdot |\{ e \in E_v : \ell_v(e) \leq i \} |)$ time to update the relevant data structures. 
 \end{lemma}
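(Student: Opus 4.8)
The plan is to mirror the argument for Lemma~\ref{lm:time:move:up}, but now accounting for the fact that moving $v$ \emph{down} from level $i$ to level $j < i$ can only \emph{decrease} the level of a hyper-edge incident on $v$, and such a decrease happens precisely when $v$ was (one of) the top endpoint(s) of that hyper-edge. First I would observe that if a hyper-edge $e$ is not incident on $v$, none of its data structures change when $v$ moves, so only hyper-edges in $E_v$ are relevant. Next, for $e \in E_v$ recall $\ell_v(e) = \max_{u \in e, u \neq v}\ell(u)$ is the largest level among the \emph{other} endpoints of $e$; since $v$ sits at level $i$ before the move, we have $\ell(e) = \max(\ell_v(e), i)$. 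After the move $v$ is at level $j \le i-1$, so the new level of $e$ is $\max(\ell_v(e), j)$. These two quantities differ only when $\ell_v(e) < i$, i.e.\ when $v$ was strictly the highest endpoint of $e$; when $\ell_v(e) \ge i$ the edge level stays at $\ell_v(e)$ and nothing about $e$'s bookkeeping needs to change. Hence the set of affected hyper-edges is contained in $\{e \in E_v : \ell_v(e) < i\} \subseteq \{e \in E_v : \ell_v(e) \le i\}$, which is exactly the set appearing in the bound.

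Then I would account for the per-edge cost. For each affected hyper-edge $e$, its level changes from $i$ to $\max(\ell_v(e), j)$, so $w(e)$ changes, and we must: recompute $\ell(e)$ and $w(e)$ (the data structure in item (3)); move $e$ from the list $E_{u, i}$ to the list $E_{u, \max(\ell_v(e), j)}$ for each of its $\le f$ endpoints $u$, adjusting the associated counters (item (1)); and update $W_u$, $W_u^+$, and the state indicators for each such endpoint (item (2)). Each endpoint thus incurs $O(1)$ work — list insertion/deletion is $O(1)$ by the pointer structure, and the weight/state updates are $O(1)$ arithmetic — so the work per affected hyper-edge is $O(f)$. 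Multiplying by the number of affected hyper-edges, which is at most $|\{e \in E_v : \ell_v(e) \le i\}|$, gives the claimed $O(f \cdot |\{e \in E_v : \ell_v(e) \le i\}|)$ bound. One should also note that moving $v$ itself to level $j$ and updating $\ell(v)$, $W_v$, $W_v^+$ and $v$'s own state indicators is subsumed in this cost (it only requires rescanning $v$'s incident hyper-edges).

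I do not expect a genuine obstacle here; the only subtle point — and the one I would be careful to state explicitly — is why the bound uses $\{e \in E_v : \ell_v(e) \le i\}$ rather than the tighter $\{e \in E_v : \ell_v(e) < i\}$. The looser set is used because, depending on the target level $j$, an edge with $\ell_v(e) = i$ does \emph{not} change level (it goes from $\max(i,i)=i$ to $\max(i,j)=i$), but including it costs nothing and the slightly larger quantity is what the amortized analysis elsewhere is set up to charge against; in any case $\{e : \ell_v(e) < i\} \subseteq \{e : \ell_v(e) \le i\}$, so the stated bound is certainly valid. The proof is therefore a short case analysis on whether $\ell_v(e) < i$ or $\ell_v(e) \ge i$, combined with the $O(f)$-per-affected-edge accounting already used in Lemma~\ref{lm:time:move:up}.
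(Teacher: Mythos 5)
Your proof is correct and follows essentially the same approach as the paper's: restrict attention to $E_v$, argue that only edges with $\ell_v(e) \le i$ are touched, and charge $O(f)$ per touched edge for updating the per-endpoint lists, counters, weights, and indicators. One small refinement to your discussion of why the bound uses $\{e : \ell_v(e) \le i\}$ rather than $\{e : \ell_v(e) < i\}$: since $\ell(v) = i$ before the move, the list $E_{v,i}$ is \emph{exactly} $\{e \in E_v : \ell_v(e) \le i\}$, and the algorithm has no way to pick out the subset with $\ell_v(e) < i$ (the edges that actually relocate) without scanning the whole list; moreover, for $v$ itself an edge with $\ell_v(e) = i$ does switch from being a same-level edge to an up-edge, which affects $W_v^+$. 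So those edges genuinely must be examined, not merely admitted into a looser bound for convenience — but this does not change the conclusion, and your proof stands.
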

 
 \begin{proof} 
 	If a hyper-edge $e$ is not adjacent to the node $v$, then the data structures associated with $e$ are not affected as $v$ moves down from level $i$ to level $j$. Further, among the hyper-edges  $e \in E_v$, only the ones with $\ell_v(e) \leq i$ get affected (i.e., the data structures associated with them need to be changed) as $v$ moves down from level $i$ to level $j$. Finally, for every hyper-edge that gets affected, we need to spend $O(f)$ time to update the data structures for its $f$ endpoints. 
 \end{proof}

\section{The algorithm: Handling the insertion/deletion of a hyper-edge in the input graph}
\label{mainbody:sec:algo}

Initially, the  graph $\mathcal{G}$ is empty, every node is at level $0$, and  Invariant~\ref{mainbody:inv:node} holds.  By induction,  we will ensure that the following property is satisfied just before the insertion/deletion of a hyper-edge. 
 \begin{property}
 \label{mainbody:prop:inv}
 No node $v \in V$ is {\sc Dirty}.
 \end{property}

 \noindent{\bf Insertion of a hyper-edge $e$.} When a hyper-edge $e$ is inserted into the input graph, it is assigned a   level $\ell(e) = \max_{v \in e}\ell(v)$ and a
     weight $w(e) = \beta^{-\ell(e)}$. The hyper-edge gets inserted into the linked lists $E_{v, \ell(e)}$ for all nodes $v \in e$. Furthermore, for every node $v \in e$, the weights $W_v$  increases by $w(e)$. For every endpoint $v \in e$, if $\ell(v) < \ell(e)$, then the weight $W_v^+$ increases by $w(e)$.  As a result of these operations, one or more endpoints of $e$ can now become {\sc Up-Dirty}
     and Property~\ref{mainbody:prop:inv} might no longer be satisfied. Hence, in order to restore Property~\ref{mainbody:prop:inv} we  call the subroutine described in Figure~\ref{mainbody:fig:fix:dirty}.
 
 \smallskip
 \noindent{\bf Deletion of a hyper-edge $e$.} When a hyper-edge $e$ is deleted from the input graph, we erase all the data structures associated with it. We remove the hyper-edge from the linked lists $E_{v, \ell(e)}$ for all $v \in e$, and erase the values $w(e)$ and $\ell(e)$. For every node $v \in e$,  the weight $W_v$ decreases by $w(e)$. Further, for every endpoint $v \in e$, if $\ell(v) < \ell(e)$, then we decrease the weight $W_v^+$ by $w(e)$. As a result of these operations, one or more endpoints of $e$ can now become {\sc Down-Dirty}, and Property~\ref{mainbody:prop:inv} might get violated. Hence, in order to restore Property~\ref{mainbody:prop:inv} we  call the subroutine described in Figure~\ref{mainbody:fig:fix:dirty}.
 
  \begin{figure}[h!]
  	\centerline{\framebox{
  			\begin{minipage}{5.5in}
  				\begin{tabbing}
  					01. \ \ \ \ \ \ \ \=  {\sc While} the set of {\sc Dirty} nodes is nonempty \\
  					02. \> \ \ \ \ \ \ \ \=  {\sc If} there exists some {\sc Up-Dirty} node $v$: \\
  					03. \> \>  \ \ \ \ \ \ \ \ \ \ \= FIX-UP-DIRTY($v$). \\
  					04. \> \> {\sc Else if} there exists some {\sc Down-Dirty} node $v$: \\
  					05. \> \> \> FIX-DOWN-DIRTY($v$). 
  				\end{tabbing}
  			\end{minipage}
  		}}
  		\caption{\label{mainbody:fig:fix:dirty} FIX-DIRTY($\cdot$)}
  	\end{figure}

\smallskip
\noindent
The algorithms is simple -- as long as some {\sc Dirty} node remains, it runs either FIX-UP-DIRTY or FIX-DOWN-DIRTY to take care of \UD and \DD nodes respectively. One crucial aspect is that we prioritize \UD nodes over \DD ones.

\smallskip
\noindent{\bf FIX-DOWN-DIRTY($v$):}  Suppose that $\ell(v) = i$ when the subroutine is called at time $t$. By definition, we have $i > 0$ and $W_v(t) \leq 1/(\alpha \beta^2)$. 
We need to increase the value of $W_v$ if we want to ensure that $v$ no longer remains {\sc Dirty}.  This means that we should decrease the level of $v$, so that some of the hyper-edges incident on $v$ can increase their weights.
Accordingly, we find the {\em largest} possible level $j \in \{1,\ldots,(i-1)\}$ such that $W_{v\to j}(t) > 1/\beta^2$, and move the node $v$ down to this level $j$. If no such level exists, that is, if even
 $W_{v\to 1}(t) \le  1/\beta^2$, then we move the node $v$ down to level $0$. Note that in this case there is no hyper-edge $e \in E_v$ with $\ell_v(e) = 0$ for such a hyper-edge would have $w(e) = \beta^{-1} > 1/\beta^2$ when $v$ is moved to level $1$. 
 In particular, we get $W_{v\to 0}(t) = W_{v\to 1}(t)$.

 \begin{claim}
 \label{mainbody:cl:fix:down:dirty}
 \textrm{\em FIX-DOWN-DIRTY($v$)} makes the node $v$ \Clean. 
 \end{claim}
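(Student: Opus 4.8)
The plan is to verify the two defining conditions of a \Clean node (Definition~\ref{mainbody:def:clean}) for the new level $j$ at which FIX-DOWN-DIRTY places $v$, splitting into the case $j = 0$ and the case $j \geq 1$. Throughout, I will work with the quantities $W_{v \to k}(t)$ evaluated at the call time $t$, using the fact that moving $v$ alone down to level $k$ does not change the levels of the other endpoints of any hyper-edge incident on $v$, so that $W_v$ after the move equals $W_{v \to j}(t)$ and $W_v^+$ after the move equals $W_{v \to j}^+(t)$.

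First I would handle the \emph{lower bound} on $W_v$. If $j \geq 1$, then $j$ was chosen as the largest level in $\{1,\dots,i-1\}$ with $W_{v\to j}(t) > 1/\beta^2$, so the condition $W_v = W_{v\to j}(t) > 1/\beta^2$ holds by definition of $j$. If $j = 0$, the relevant \Clean condition (1) only requires $W_v \leq 1/\beta^2$, which holds since the subroutine moves $v$ to level $0$ precisely when $W_{v\to 1}(t) \leq 1/\beta^2$, and the text already observes $W_{v\to 0}(t) = W_{v\to 1}(t)$ in that case (no incident hyper-edge has $\ell_v(e) = 0$). So condition (1) is fully settled and the lower bound in condition (2) is settled.

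Next, the \emph{upper bound} $W_v \leq 1/\beta$ when $j \geq 1$. This is the crux: I would compare $W_{v\to j}(t)$ with $W_{v \to (j+1)}(t)$. By maximality of $j$, level $j+1$ (which lies in $\{1,\dots,i-1\}$ unless $j+1 = i$, but the case $j+1=i$ is handled since $W_v(t) \leq 1/(\alpha\beta^2) \leq 1/\beta^2$ means level $i$ itself fails the threshold) satisfies $W_{v \to (j+1)}(t) \leq 1/\beta^2$. Now $W_{v\to j}(t)$ and $W_{v\to(j+1)}(t)$ differ only in the contribution of hyper-edges $e \in E_v$ with $\ell_v(e) \leq j$: for such edges the weight goes from $\beta^{-j}$ (at level $j$) to $\beta^{-(j+1)}$ (at level $j+1$), i.e. it shrinks by a factor $\beta$. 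Hence $W_{v\to j}(t) \leq \beta \cdot W_{v\to (j+1)}(t) \leq \beta \cdot (1/\beta^2) = 1/\beta$, giving the upper bound. (One should double-check the edge case where $j+1=i$: then $W_{v\to i}(t) = W_v(t) \leq 1/(\alpha\beta^2) < 1/\beta^2$, so the same inequality $W_{v\to j}(t)\le \beta W_{v\to i}(t) \le 1/\beta$ applies.)

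Finally, the \emph{up-weight bound} $W_v^+ \leq 1/\beta^2$ when $j \geq 1$. After the move, $W_v^+ = W_{v\to j}^+(t) = \sum_{e \in E_v : \ell_v(e) > j} \beta^{-\ell_v(e)}$. But every hyper-edge $e$ with $\ell_v(e) > j$, i.e. $\ell_v(e) \geq j+1$, contributes to $W_{v\to(j+1)}(t)$ exactly its weight $\beta^{-\ell_v(e)}$ (since at level $j+1$ its weight is $\beta^{-\max(\ell_v(e),j+1)} = \beta^{-\ell_v(e)}$), and these contributions are a subset of the terms of $W_{v \to (j+1)}(t)$, all nonnegative. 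Therefore $W_v^+ = W_{v\to j}^+(t) \leq W_{v\to (j+1)}(t) \leq 1/\beta^2$, using again that $W_{v\to(j+1)}(t)\le 1/\beta^2$ (with the same caveat for $j+1=i$). Combining the three bullet points gives that $v$ is \Clean in both cases. I expect the main obstacle to be getting the boundary case $j+1 = i$ right and being careful that "level $j+1$ fails the threshold $1/\beta^2$" is justified in every subcase — once that bookkeeping is in place, the factor-$\beta$ comparison between consecutive levels does all the work.
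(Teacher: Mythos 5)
Your proposal is correct and follows essentially the same argument as the paper: handle $j=0$ directly, then for $j\geq 1$ use maximality of $j$ to get $W_{v\to(j+1)}(t)\leq 1/\beta^2$, deduce $W_{v\to j}(t)\leq\beta\cdot W_{v\to(j+1)}(t)\leq 1/\beta$ from the factor-$\beta$ scaling of a single level drop, and bound $W^+_{v\to j}(t)\leq W_{v\to(j+1)}(t)$ by observing that edges contributing to the up-weight at level $j$ have $\ell_v(e)\geq j+1$ and hence contribute the same weight to $W_{v\to(j+1)}(t)$. You are in fact slightly more careful than the paper in explicitly noting that the boundary case $j+1=i$ is covered because $W_{v\to i}(t)=W_v(t)\leq 1/(\alpha\beta^2)<1/\beta^2$, a detail the paper's proof uses implicitly without comment.
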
 
 
\begin{proof}
	Suppose node $v$ was at level $i$ when FIX-DOWN-DIRTY($v$) was called at time $t$ and it ended up in level $j < i$.
 If $j = 0$, then $W_{v\to 0}(t) \leq 1/\beta^2$, and so $v$ becomes \Clean after time $t$. Henceforth assume $j > 0$.
Since $j\in \{1,\ldots,i-1\}$ is the maximum level where $W_{v\to j}(t) > 1/\beta^2$, we have $W_{v\to (j+1)}(t) \leq 1/\beta^2$. 
Now note that $W_{v\to j}(t) \leq \beta\cdot W_{v\to (j+1)}(t)$ since weights of hyper-edges can increase by at most a factor $\beta$ when one end point drops exactly one level.
This implies $W_{v\to j}(t) \leq 1/\beta$. Together we get that after time $t$ when  $v$ is fixed to level $j$, we have $1/\beta^2 < W_v \leq 1/\beta$.

Now we argue about the up-weights. Note that every hyper-edge $e$ that contributes to $W^+_{v\to j}(t)$ must have $\ell_v(e) \ge (j+1)$. The weight of such a hyper-edge remains unchanged as 
$v$ moves from level $(j+1)$ to $j$. We infer that $W^+_{v\to j}(t) \leq W_{v\to (j+1)}(t) \leq 1/\beta^2$. Therefore after time $t$ when $v$ is fixed at level $j$, we have $W^+_v \le 1/\beta^2.$
In sum, $v$ becomes \Clean after time $t$.
\end{proof}

\noindent{\bf FIX-UP-DIRTY($v$):}  Suppose that $\ell(v) = i$ when the subroutine is called at time $t$. At this stage, we have either \{$i = 0, W_v(t) > 1/\beta^2$\} or \{$i > 1, W_v(t) \geq 1$\}. 
We need to increase the level of $v$ so as to reduce the weight faced by it.
Accordingly, we find the {\em smallest} possible level $j \in \{i+1,\ldots, L\}$ where $W_{v\to j}(t) \leq 1/\beta$ and move $v$ up to level $j$.  Such a level $j$ always exists because $W_{v\to L}(t) \leq n^f \cdot \beta^{-L} \leq 1/\beta$.

 \begin{claim}
 \label{mainbody:cl:fix:up:dirty}
After a call to the subroutine {\em FIX-UP-DIRTY($v$)} at time $t$, we have $1/\beta^2 < W_v \leq 1/\beta$.
 \end{claim}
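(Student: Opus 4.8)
The plan is to mimic the proof of Claim~\ref{mainbody:cl:fix:down:dirty}, but now working with the \emph{smallest} level satisfying an upper bound rather than the largest level satisfying a lower bound. Let $j \in \{i+1,\ldots,L\}$ be the level to which FIX-UP-DIRTY($v$) moves the node, so by the choice of $j$ we have $W_{v\to j}(t) \le 1/\beta$. This immediately gives the upper bound $W_v \le 1/\beta$ after time $t$, since no other node changes level during this operation and so $W_v = W_{v \to j}(t)$ once $v$ is relocated. The work is all in the lower bound $W_v > 1/\beta^2$.

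For the lower bound I would split into two cases according to whether $j$ is the first candidate level $i+1$ or strictly larger. If $j > i+1$, then by minimality of $j$ the level $j-1 \in \{i+1,\ldots,L\}$ fails the test, i.e. $W_{v\to (j-1)}(t) > 1/\beta$. Now observe that when $v$ moves from level $j-1$ up to level $j$, the weight of each incident hyper-edge decreases by at most a factor of $\beta$: for a hyper-edge $e \in E_v$ its weight is $\beta^{-\max(\ell_v(e), \cdot)}$, and raising $v$'s level by one can only raise $\max(\ell_v(e),\cdot)$ by at most one. Hence $W_{v\to j}(t) \ge \beta^{-1} W_{v\to (j-1)}(t) > \beta^{-1}\cdot \beta^{-1} = 1/\beta^2$, as desired. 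If instead $j = i+1$, I would use the fact that $v$ was \textsc{Up-Dirty} at time $t$: in both sub-cases of Definition~\ref{mainbody:def:down:dirty} we have $W_v(t) = W_{v \to i}(t) > 1/\beta^2$ (either $\ell(v)=0$ with $W_v(t) > 1/\beta^2$, or $\ell(v) > 0$ with $W_v(t) \ge 1 > 1/\beta^2$). Again moving $v$ up one level from $i$ to $i+1 = j$ decreases each incident hyper-edge weight by at most a factor $\beta$, so $W_{v\to j}(t) \ge \beta^{-1} W_{v\to i}(t) > \beta^{-1}\cdot\beta^{-2}$, which is even stronger than we need; in fact one only needs $W_{v \to j}(t) \ge \beta^{-1} W_{v\to i}(t) > \beta^{-1} \cdot \beta^{-2} > \beta^{-2}$, so the bound $W_v > 1/\beta^2$ holds.

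Combining the two cases, after the call we have $1/\beta^2 < W_v \le 1/\beta$. The only mildly delicate point — and the step I would be most careful about — is the monotonicity claim that $W_{v\to j}(t) \ge \beta^{-1} W_{v\to (j-1)}(t)$, i.e. that raising $v$ by a single level shrinks the total weight it faces by at most a factor $\beta$. This needs the per-hyper-edge observation that $\max(\ell_v(e), j) \le \max(\ell_v(e), j-1) + 1$, which is a direct consequence of $\beta^{-\ell(e)}$ depending on $\ell(e)$ monotonically and $\ell(e)$ increasing by at most $1$; I would state this explicitly as it is exactly the up-going analogue of the down-going fact used in Claim~\ref{mainbody:cl:fix:down:dirty}. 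Everything else is immediate from the definition of $j$ and from $v$ being \textsc{Up-Dirty}, so no further machinery is required.
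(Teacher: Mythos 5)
Your Case A ($j > i+1$) is sound and matches the paper's argument. Your Case B ($j = i+1$), however, contains a genuine gap that is also an arithmetic error: you write $\beta^{-1}\cdot\beta^{-2} > \beta^{-2}$, but since $\beta = 17 > 1$ this is false — $\beta^{-3} < \beta^{-2}$. The issue is not just notation. When $i = 0$, the \textsc{Up-Dirty} condition only gives $W_{v\to 0}(t) > 1/\beta^2$ (not $\ge 1$), so the generic ``drop by at most a factor $\beta$'' bound yields only $W_{v\to 1}(t) > 1/\beta^3$, which does not establish the claim. And this configuration is realizable: take $i=0$ with no incident hyper-edge having $\ell_v(e)=0$ and $W_{v\to 0}(t)$ just above $1/\beta^2$; then $j = 1$ is chosen and your Case B argument falls short.

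The paper patches exactly this case with a finer split. If $i = 0$ and \emph{some} $e \in E_v$ has $\ell_v(e)=0$, then $w(e)=1$, so $W_{v\to 0}(t) \ge 1 > 1/\beta$ and your factor-$\beta$ argument goes through. If $i = 0$ and \emph{no} such $e$ exists, then every $e \in E_v$ has $\ell_v(e)\ge 1$, hence $\max(\ell_v(e),0) = \max(\ell_v(e),1) = \ell_v(e)$ and $W_{v\to 1}(t) = W_{v\to 0}(t) > 1/\beta^2$ with no loss at all — a strictly stronger observation than the generic $\beta^{-1}$ bound. With that extra case distinction in place, your decomposition ($j = i+1$ versus $j > i+1$) becomes equivalent to the paper's four-case proof; without it, Case B as written is incorrect.
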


\begin{proof}
Suppose that the node $v$ moves up from level $i$ to level $j > i$. We now consider four possible cases.

\begin{itemize}
\item {\em Case 1. We have $i > 0$.}

Since $j \in \{i+1,\ldots,L\}$ is the minimum possible level where $W_{v\to j}(t) \leq 1/\beta$, and  since $W_{v \to i}(t) > 1/\beta$, we infer that $W_{v\to (j-1)}(t) > 1/\beta$. As the node $v$ moves up from level $(j-1)$ to level $j$, the weight it faces can drop by at most a factor of $\beta$. Hence, we get: $W_{v\to j}(t) \geq (1/\beta) \cdot W_{v\to (j-1)}(t) > 1/\beta^2$. Therefore after time $t$ when the node $v$ moves to level $j$, we have $1/\beta^2 < W_v \leq 1/\beta$.

\item {\em Case 2. We have $i = 0$, and there is an edge $e \in E_v$ with $\ell_v(e) = 0$ at time $t$.}
In this case
we have $W_{v \to i}(t) \geq 1$  in the beginning of time-step $t$, since  the edge $e \in E_v$ with $\ell_v(e) = 0$ has weight $w(e) = 1$. The rest of the proof is exactly similar to Case 1.

\item {\em Case 3. We have $i =0$, there is no edge $e \in E_v$ with $\ell_v(e) = 0$ at time $t$, and $j = 1$.}

The value of $W_v$ does not change as $v$ moves up from level $i = 0$ to level $j = 1$. Thus, we get:  $W_{v \to j}(t) = W_{v \to 0}(t) > 1/\beta^2$, for the node $v$ is {\sc Up-Dirty} at level $i =0$ at time $t$. Since the node does not move further up than level $j$, we get: $W_{v \to j}(t) \leq 1/\beta$.

\item {\em Case 4. We have $i = 0$, there is no edge $e \in E_v$ with $\ell_v(e) = 0$ at time $t$, and $j > 1$.}

Since the node $v$ does not stop at level $1$, we get: $W_{v \to 1}(t) > 1/\beta$. Hence, we infer that $j \in \{2, \ldots, L\}$ is the minimum possible level where $W_{v \to j}(t) \leq 1/\beta$. As the node $v$ moves up from level $(j-1)$ to level $j$, the weight it faces can drop by at most a factor of $\beta$. Hence, we get: $W_{v\to j}(t) \geq (1/\beta) \cdot W_{v\to (j-1)}(t) > 1/\beta^2$. Therefore, we have $1/\beta^2 < W_{v \to j}(t) \leq 1/\beta$.
\end{itemize}
\end{proof}

It is clear that if and when FIX-DIRTY() terminates, we are in a state which satisfies Invariant~\ref{mainbody:inv:node}. In the next section we show that after $T$ hyper-edge insertions and deletions, the total update time is indeed $O(f^2 \cdot T)$ and so our algorithm has $O(f^2)$-amortized update time.

  \section{Analysis of the algorithm}
  \label{sec:analysis}
  Starting from an empty graph $\mathcal{G} = (V, E)$,  fix any sequence of $T$ {\em updates}. The term ``update'' refers to the insertion or deletion of  a hyper-edge in $\mathcal{G}$. We  show that the total time taken by our algorithm to handle this sequence of updates is $O(f^2 \cdot T)$. We  also show that our algorithm has an approximation ratio of $O(f^3)$.
  
  \paragraph{Relevant counters.} We define three counters $C^{up}, C^{down}$ and $I^{down}$. The first two counters account for the time taken to update the data structures while the third accounts for the time taken to find the index $j$ 
  in both FIX-DOWN-DIRTY($v$) and FIX-UP-DIRTY($v$).
  Initially, when the input graph is empty, all the three counters are set to zero. 
  Subsequently, we increment these counters as follows.
  \begin{enumerate}
  	\item Suppose node $v$ moves from level $i$ to level $j>i$ upon a call of FIX-UP-DIRTY($v$).  Then for every hyper-edge $e \in E_v$ with $\ell_v(e) \leq j-1$, we increment  $C^{up}$ by one. 
    \item Suppose node $v$ moves from level $i$ to level $j < i$ upon a call of FIX-DOWN-DIRTY($v$). Then for every hyper-edge $e \in E_v$ with $\ell_v(e) \leq i$, we increment the value of $C^{down}$ by one. Furthermore, we increment the value of $I^{down}$ by $\beta^{i-2}/\alpha$.
  \end{enumerate}
  

  \medskip
  \noindent The next lemma upper bounds the total time taken by our algorithm in terms of the values of these counters. The proof of Lemma~\ref{lm:counter:time} appears in Section~\ref{sec:lm:counter:time}.
  
  \begin{lemma}
  	\label{lm:counter:time}
  	Our algorithm takes $\Theta(f \cdot (C^{up} + C^{down} + T) + f^2 I^{down})$ time to handle a sequence of $T$ updates.
  \end{lemma}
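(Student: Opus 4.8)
The plan is to bound each of the three sources of work in FIX-DIRTY separately and match it to the appropriate counter. First I would account for the bookkeeping done at each update that is independent of level changes: inserting/deleting a hyper-edge $e$ touches the linked lists $E_{v,\ell(e)}$ and the counters $W_v, W_v^+$ for the at most $f$ endpoints of $e$, which is $O(f)$ work per update, hence $O(f\cdot T)$ in total; this is the ``$+T$'' term. Next, for a call to FIX-UP-DIRTY($v$) that moves $v$ from level $i$ to level $j$, Lemma~\ref{lm:time:move:up} says the data-structure update costs $O(f\cdot|\{e\in E_v:\ell_v(e)<j\}|)$; but that is exactly $O(f)$ times the amount by which we increment $C^{up}$ on this call, so the total data-structure cost over all FIX-UP-DIRTY calls is $O(f\cdot C^{up})$. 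Symmetrically, for a call to FIX-DOWN-DIRTY($v$) moving $v$ from level $i$ to $j<i$, Lemma~\ref{lm:time:move:down} gives cost $O(f\cdot|\{e\in E_v:\ell_v(e)\le i\}|)$, which is $O(f)$ times the increment to $C^{down}$, so the total is $O(f\cdot C^{down})$.

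It remains to bound the time spent \emph{searching} for the target level $j$ inside the two subroutines, since that cost is not charged by $C^{up}$ or $C^{down}$. For FIX-UP-DIRTY($v$), to evaluate $W_{v\to k}(t)$ incrementally as $k$ runs from $i+1$ upward until the first level with $W_{v\to k}(t)\le 1/\beta$, we maintain running values and update them by scanning the hyper-edges $e\in E_v$ whose relevant level $\ell_v(e)$ lies in the range being crossed; the crucial point is that every hyper-edge examined during this search has $\ell_v(e)\le j-1$ (if $\ell_v(e)\ge j$ its contribution $\beta^{-\ell_v(e)}$ is unchanged throughout the search, so it need not be revisited, and it can be accounted for once at the start via the stored $W_v,W_v^+$). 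Thus, up to an additive $O(f)$ for the initial setup, the search work for a FIX-UP-DIRTY call is $O(f\cdot|\{e\in E_v:\ell_v(e)\le j-1\}|)$, again $O(f)$ times the $C^{up}$ increment — so the up-search cost folds into $O(f\cdot C^{up}+f\cdot T)$.

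The genuinely delicate case is the level search in FIX-DOWN-DIRTY($v$), which starts at level $i$ and walks \emph{downward} through levels $i-1,i-2,\dots$ looking for the largest $j$ with $W_{v\to j}(t)>1/\beta^2$ (or reaching level $0$). Here the number of levels scanned can be as large as $i$ even when very few hyper-edges are incident to $v$, so we cannot charge it to $C^{down}$; instead we must charge it to $I^{down}$, which is incremented by $\beta^{i-2}/\alpha$ on this call. The argument I would give: scanning from level $i$ down to the stopping level costs $O(\text{(number of levels scanned)} + f\cdot(\text{edges whose }\ell_v(e)\text{ is crossed}))$, and the second part is already $O(f\cdot C^{down})$; for the first part, the number of levels scanned is at most $i$, and since the subroutine was invoked on a \DD node we have $W_v(t)\le 1/(\alpha\beta^2)$, so indeed $i$ is at most the exponent in the increment $\beta^{i-2}/\alpha$ — more precisely, $i \le \beta^{i-2}/\alpha$ fails for small $i$, so one instead argues that $i=O(\beta^{i-2}/\alpha)$ up to constants only for $i$ beyond a threshold, and small-$i$ scans cost $O(1)$ each and are absorbed into the $O(T)$ or $O(C^{down})$ terms; writing $f^2 I^{down}$ rather than $f\,I^{down}$ gives the slack to absorb the per-level constant together with the $O(f)$ pointer work. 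I expect this reconciliation — showing the downward level-count is dominated by $\beta^{i-2}/\alpha$ and hence by $f^2 I^{down}$ — to be the main obstacle; everything else is a direct application of Lemmas~\ref{lm:time:move:up} and~\ref{lm:time:move:down} and the definitions of the counters.
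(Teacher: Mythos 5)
Your decomposition into three sources of work (per-update bookkeeping $= O(fT)$; data-structure updates via Lemmas~\ref{lm:time:move:up}, \ref{lm:time:move:down} $= O(fC^{up})$ and $O(fC^{down})$; index-search cost needing separate charging) is the right skeleton and matches the paper. But you have the roles of the two index searches exactly reversed, and this hides a genuine gap.

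The claim that the FIX-UP-DIRTY search costs $O(f\cdot|\{e\in E_v:\ell_v(e)\le j-1\}|)$ (plus an additive $O(f)$) is false. The search must iterate over the levels $i+1,i+2,\dots,j$ one at a time, evaluating $W_{v\to k}$ at each level (as in Claim~\ref{ckm:hillary}: $W_{v\to (k+1)}(t)=W_{v\to k}(t)-(1-1/\beta)|E_{v,k}|\beta^{-k}$, $\Theta(1)$ per level), because you must certify $W_{v\to k}(t)>1/\beta$ for every intermediate $k<j$ before stopping. The number of levels traversed, $j-i$, is \emph{not} bounded by $|\{e\in E_v:\ell_v(e)\le j-1\}|$. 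For instance, take $v$ at level $i=0$ with a single edge $e_0$ at $\ell_v(e_0)=0$ and roughly $\beta^{j-1}$ edges all at $\ell_v(e)=j$: then $W_{v\to k}(t)=\beta^{-k}+S$ with $S$ tuned so that $W_{v\to (j-1)}>1/\beta\ge W_{v\to j}$. Here the search walks through $j$ levels while $|\{e:\ell_v(e)\le j-1\}|=1$, so your bound is off by an arbitrary factor. Your proposed fix (``don't revisit edges with $\ell_v(e)\ge j$'') doesn't help because the cost is in the empty levels you must step through, not in re-examining edges; to skip empty levels in $O(1)$ each would require an auxiliary successor structure the paper does not build (and would not trivially be $O(1)$ per query anyway).

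Conversely, the FIX-DOWN-DIRTY search is the \emph{easy} case in the paper, not the delicate one, and your detour through ``number of levels scanned vs.\ $\beta^{i-2}/\alpha$, with a threshold for small $i$'' is both unnecessary and hard to close (the cases $i=O(\log f)$ where $i>\beta^{i-2}/\alpha$ are not obviously absorbable as you hope). The paper instead observes that the down-search only needs to examine the edges $E_{v,i}=\{e\in E_v:\ell_v(e)\le i\}$, and since each such edge currently has weight $\beta^{-i}$ while $W_v\le 1/(\alpha\beta^2)$, we get $|E_{v,i}|\le\beta^{i-2}/\alpha=\Delta I^{down}$ directly; the search time is $\Theta(\Delta I^{down})$ with no thresholding and no level-counting.

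The argument you are missing is the one that actually produces the $f^2 I^{down}$ term: first assume WLOG that the update sequence ends with the empty graph (append the $T'\le T$ deletions; this at most doubles $T$). Then every phase $\Phi_v$, going $i_1<\dots<i_k$, is terminated by a FIX-DOWN-DIRTY$(v)$ call whose $\Delta I^{down}=\beta^{i_k-2}/\alpha$. Since $\beta=\Theta(1)$ and $\alpha=\Theta(f^2)$, we have $\beta^{i_k-2}/\alpha=\Omega(i_k/f^2)$, hence $f^2\cdot\Delta I^{down}=\Omega(i_k)\ge i_k-i_1$, which dominates the total up-search cost $\Theta(\sum_r (i_{r}-i_{r-1}))=\Theta(i_k-i_1)$ in the phase. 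Summing over all phases gives the $O(f^2 I^{down})$ bound. Without the empty-final-graph normalization and this phase-level charge, the $\Theta(j-i)$ up-search cost is simply not controlled by any of $C^{up}$, $C^{down}$, or $T$.
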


We will show that  $C^{up} = \Theta(f) \cdot T$ and $C^{down}+I^{down} = O(1) \cdot T$, which  will imply an amortized update time of $O(f^2)$ for our algorithm.    Towards this end, we now prove three lemmas that relate the values of these three counters.

\begin{lemma}
\label{lm:counter:new}
We have: $C^{down} \leq I^{down}$.
\end{lemma}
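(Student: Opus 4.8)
The plan is to prove the inequality \emph{call by call}: I will show that every single invocation of FIX-DOWN-DIRTY increments $C^{down}$ by at most as much as it increments $I^{down}$. Since these invocations are the only events in the run of the algorithm that change either counter, summing the per-call bound over all such invocations during the sequence of $T$ updates immediately yields $C^{down}\le I^{down}$.

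So fix one invocation of FIX-DOWN-DIRTY($v$), say at time $t$, and let $i=\ell(v)>0$ be the level of $v$ when the call is made; let $j<i$ be the level it ends up at. By the definition of the counters, this call increases $C^{down}$ by exactly $k:=\bigl|\{\,e\in E_v : \ell_v(e)\le i\,\}\bigr|$ and increases $I^{down}$ by exactly $\beta^{i-2}/\alpha$. Hence it suffices to show $k\le \beta^{i-2}/\alpha$. For this, recall that for any hyper-edge $e\in E_v$ we have $\ell(e)=\max(\ell_v(e),\ell(v))=\max(\ell_v(e),i)$, so every hyper-edge $e$ counted in $k$ (i.e.\ with $\ell_v(e)\le i$) satisfies $\ell(e)=i$ and therefore contributes $w(e)=\beta^{-\ell(e)}=\beta^{-i}$ to the weight of $v$ at time $t$. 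Consequently $W_v(t)\ge k\cdot\beta^{-i}$. On the other hand, FIX-DOWN-DIRTY($v$) is invoked only when $v$ is \DD, which by Definition~\ref{mainbody:def:down:dirty} (together with the remark preceding Claim~\ref{mainbody:cl:fix:down:dirty}) gives $W_v(t)\le 1/(\alpha\beta^2)$. Combining the two bounds yields $k\cdot\beta^{-i}\le 1/(\alpha\beta^2)$, i.e.\ $k\le \beta^{i-2}/\alpha$, which is exactly the amount by which $I^{down}$ was incremented in this call.

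There is no substantive obstacle here; the only point requiring care is the observation that the hyper-edges charged to $C^{down}$ in a given call are precisely those whose weight equals $\beta^{-i}$ at that moment — this is forced by $\ell_v(e)\le i=\ell(v)$, which makes $\ell(e)=i$ — since it is exactly this identification that lets the \DD upper bound on $W_v(t)$ control their number and match the $I^{down}$ increment term for term.
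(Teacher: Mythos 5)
Your proof is correct and follows essentially the same route as the paper's: bound the increment to $C^{down}$ on a per-call basis by observing that every hyper-edge charged has weight exactly $\beta^{-i}$ (with $i$ the starting level), so the {\sc Down-Dirty} condition $W_v(t)\le 1/(\alpha\beta^2)$ forces at most $\beta^{i-2}/\alpha$ such edges, which is exactly the corresponding increment to $I^{down}$. If anything your write-up is the cleaner of the two, since the paper's own proof inadvertently swaps the roles of $i$ and $j$ after its opening sentence (it writes ``moves down from level $j$ to level $i<j$'' but then treats $i$ as the starting level in $\Delta_I^{down}=\beta^{i-2}/\alpha$ and in the definition of $X$); you keep the notation consistent with the counter definitions throughout.
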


    \begin{lemma}
  	\label{lm:counter:down}
  	We have: $I^{down} \leq \frac{f}{\alpha - 1}\cdot (T +  C^{up})$.
  \end{lemma}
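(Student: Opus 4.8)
The goal is to bound $I^{down}$, the counter that accumulates $\beta^{i-2}/\alpha$ each time a node drops from level $i$ via FIX-DOWN-DIRTY, in terms of $T$ and $C^{up}$. My plan is a potential-function (token) argument. The key physical intuition is: when FIX-DOWN-DIRTY($v$) is called at time $t$ with $\ell(v)=i$, the node is \DD, so $W_v(t)\le 1/(\alpha\beta^2)$; but just after $v$ was \emph{last} fixed (by a previous FIX-UP-DIRTY or FIX-DOWN-DIRTY, or at initialization), Claims~\ref{mainbody:cl:fix:down:dirty} and~\ref{mainbody:cl:fix:up:dirty} guarantee $W_v > 1/\beta^2$. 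So between the last fix of $v$ and time $t$, the weight $W_v$ dropped by at least $1/\beta^2 - 1/(\alpha\beta^2) = (\alpha-1)/(\alpha\beta^2)$. This drop must have been \emph{caused} by something: either hyper-edge deletions incident on $v$, or other nodes incident to $v$ moving \emph{up} (which lowers the weight of the shared hyper-edges). I would like to charge the increment $\beta^{i-2}/\alpha$ of $I^{down}$ against these causes.

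The main step is to set up the accounting carefully. For a FIX-DOWN-DIRTY($v$) at time $t$ from level $i$, the relevant weight drop is measured in terms of hyper-edges $e\in E_v$ with $\ell_v(e)\le i$ (these are exactly the ones whose weight ``mattered'' to $v$ at level $i$, and the ones counted by $C^{down}$ and by Lemma~\ref{lm:time:move:down}). Each such hyper-edge $e$ contributes $\beta^{-\ell(e)} = \beta^{-\max(\ell_v(e),i)} = \beta^{-i}$ to $W_v$ when $v$ sits at level $i$ (since $\ell_v(e)\le i$). The total weight drop of at least $(\alpha-1)/(\alpha\beta^2)$ over the interval since $v$'s last fix therefore corresponds to at least $\frac{(\alpha-1)/(\alpha\beta^2)}{\beta^{-i}} = \frac{(\alpha-1)\beta^{i-2}}{\alpha}$ ``units of weight-loss at scale $\beta^{-i}$'', which is exactly $(\alpha-1)$ times the increment $\beta^{i-2}/\alpha$ to $I^{down}$. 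So it suffices to show each such unit of weight-loss can be charged either to an update (contributing to $T$) or to a move-up event (contributing to $C^{up}$), with bounded overlap; the factor $f/(\alpha-1)$ in the statement suggests each move-up / deletion event pays for at most $f$ such units (natural, since a hyper-edge has $\le f$ endpoints and a deletion/level-change of a shared edge affects all of them).

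Concretely, I would argue: a deletion of a hyper-edge $e$ incident on $v$ decreases $W_v$ by $\beta^{-\ell(e)}$; amortized over which level $v$ is at, and since $e$ has at most $f$ endpoints, each deletion is charged at most $f$ times total across all its endpoints, each charge being at most one unit at the appropriate scale — feeding into $T$. A move-up of some node $u\ne v$ from level $p$ to level $q$ decreases $w(e)$ for those $e\ni u,v$ with $\ell_v(e)$ rising; but such an $e$ is exactly counted (once, for $u$) by $C^{up}$ (the increment rule for $C^{up}$ counts $e\in E_u$ with $\ell_u(e)\le q-1$), and again $e$ has $\le f$ endpoints so the total charge to this $C^{up}$-increment across all affected neighbors $v$ of $u$ is at most $f$. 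Summing, the total weight-loss (at the relevant scales) charged is at most $f\cdot(T + C^{up})$, while the left side is $(\alpha-1)\cdot I^{down}$ (at least), giving $I^{down}\le \frac{f}{\alpha-1}(T + C^{up})$.

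The main obstacle I anticipate is making the ``since $v$'s last fix'' bookkeeping airtight: one must ensure that the weight-loss events in disjoint lifetime-intervals of $v$ (between consecutive fixes) are charged to disjoint portions of $T$ and $C^{up}$, so that a single deletion or move-up is not over-counted across many FIX-DOWN-DIRTY calls on the \emph{same} node at different times — and simultaneously not over-counted across FIX-DOWN-DIRTY calls on \emph{different} nodes sharing a hyper-edge. The cleanest route is probably to define, for each node $v$, the ``debit'' it must collect at a FIX-DOWN-DIRTY call as $\ge (\alpha-1)\beta^{i-2}/\alpha$ in scaled units, and to show that each elementary weight-decreasing event (one deletion, or one unit counted by $C^{up}$) has enough ``credit'' — precisely one unit per endpoint, hence $\le f$ units total — to cover all the debits that point at it, using that between two consecutive fixes of $v$ a given event touches $v$'s weight at most once at a well-defined scale. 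Weight increases (insertions, or neighbors moving \emph{down}) only help $v$ and can be safely ignored for an upper bound on the number of debit-generating FIX-DOWN-DIRTY calls. I would also double-check the boundary case $j=0$ versus $j>0$ in FIX-DOWN-DIRTY, and the very first fix of a node (initialization at level $0$, where $W_v=0$ trivially), to confirm the ``$W_v>1/\beta^2$ just after last fix'' premise holds whenever a FIX-DOWN-DIRTY is actually triggered at level $i>0$.
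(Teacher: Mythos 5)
Your proposal is correct and follows essentially the same approach as the paper's proof: per FIX-DOWN-DIRTY call at level $i$, you lower-bound the drop in $W_v$ over the preceding epoch by $(1-1/\alpha)/\beta^2$ using Claims~\ref{mainbody:cl:fix:down:dirty} and~\ref{mainbody:cl:fix:up:dirty}, observe each weight-decreasing event (deletion or neighbor up-move counted by $C^{up}$) costs at most $\beta^{-i}$, conclude $(\alpha-1)\cdot\Delta_I^{down}\le\Delta^T+\Delta^{up}$, and then sum over disjoint epochs with an $f$-fold (paper: $f$ for deletions, $f-1$ for up-moves) multiplicity bound across endpoints. The bookkeeping concerns you flag in the last paragraph are exactly what the paper's epoch decomposition resolves, and your note on the boundary case (FIX-DOWN-DIRTY only fires at level $>0$, hence a prior fix always exists) is the correct resolution.
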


  \begin{lemma}
  	\label{lm:counter:up}
  	We have: $C^{up}  \leq 9 f \beta^2 \cdot (T+ C^{down})$.
	  \end{lemma}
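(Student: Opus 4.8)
\textbf{Proof proposal for Lemma~\ref{lm:counter:up}.}

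The plan is to set up a token/charging argument in which each unit added to $C^{up}$ is paid for either by a previous update (one of the $T$ hyper-edge insertions/deletions) or by a unit previously added to $C^{down}$. Recall that $C^{up}$ is incremented once for every pair $(v,e)$ with $e\in E_v$ and $\ell_v(e)\le j-1$ whenever FIX-UP-DIRTY($v$) moves $v$ from level $i$ to level $j>i$. By Claim~\ref{mainbody:cl:fix:up:dirty}, right after such a call $v$ satisfies $1/\beta^2 < W_v \le 1/\beta$; in fact, since FIX-UP-DIRTY chose the smallest valid $j$ and the algorithm prioritizes \UD nodes, I would argue that $v$ is actually \Clean at this moment (this needs the extra observation that $W_v^+\le 1/\beta^2$ after the move, which follows because the hyper-edges counted toward $W_v^+$ after the move are exactly those with $\ell_v(e)\ge j+1$, i.e. those not counted in $W_{v\to j}$, and one can bound their total by going one level lower — essentially the same computation as in Claim~\ref{mainbody:cl:fix:down:dirty}). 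So each call to FIX-UP-DIRTY($v$) starts from a moment when $v$ was last \Clean (or from the very start, level $0$, $W_v=0$) and ends with $v$ \Clean again.

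The heart of the argument: between the last time $v$ became \Clean at some level $i'$ and the current call to FIX-UP-DIRTY($v$) at level $i \ge i'$ where $W_v$ has grown to $\ge 1$ (or $>1/\beta^2$ at level $0$), a large amount of weight must have been added to $v$. When $v$ became \Clean it had $W_v\le 1/\beta$ (or $\le 1/\beta^2$ at level $0$); now $W_v\ge 1$ (or $>1/\beta^2$). Crucially, $v$'s level only changed by \emph{other} nodes' moves or by edge insertions at $v$ in the interim — if $v$ itself had moved, that would have happened via a FIX call resetting the clock. So the net increase in $W_v$, which is $\ge 1 - 1/\beta \ge \tfrac12$ (using $\beta=17$), must be accounted for by: (a) insertions of hyper-edges incident to $v$, each contributing at most $1$ to $W_v$; and (b) other endpoints of $v$'s hyper-edges dropping in level via FIX-DOWN-DIRTY, each such drop of an edge $e$ from level $\ell$ to level $\ell'$ increasing $w(e)$ by $\beta^{-\ell'}-\beta^{-\ell}\le\beta^{-\ell'}$, and this drop contributed $1$ to $C^{down}$ (for the pair $(u,e)$ where $u$ is the node that moved). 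I would total up these contributions and show the number of units charged to $C^{up}$ by the current FIX-UP-DIRTY call — which is $|\{e\in E_v:\ell_v(e)\le j-1\}|$ — is at most a constant-times-$f$ multiple of (number of incident insertions since last clean) plus (number of relevant $C^{down}$ increments since last clean). Since $v$'s weight $W_{v\to j}\le 1/\beta$ and each counted edge $e$ has $w(e)=\beta^{-\max(\ell_v(e),j)}\ge\beta^{-j}$ after the move but the relevant point is its weight \emph{before}: edges with $\ell_v(e)\le j-1$ each had weight at least $\beta^{-(j-1)}$ counting toward $W_v$ at $v$'s pre-move level only if $\ell(v)$ was high enough — I will instead bound $|\{e: \ell_v(e)\le j-1\}|$ by $\beta^{j-1}\cdot(\text{weight such edges would carry at level } j-1)$, and relate that weight-mass to the accumulated increase via the clean-state slack, picking up the factor $\beta^2$ and an $f$ from the frequency (each insertion/drop touches up to $f$ endpoints).

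The main obstacle I anticipate is making the bookkeeping of ``which $C^{down}$ increment pays for which $C^{up}$ increment'' genuinely injective (or $O(f)$-to-one) across the whole update sequence, rather than just locally per call. The clean issue is that a single FIX-DOWN-DIRTY of some node $u$ raises the weight at possibly many neighbors $v$ simultaneously, so the same $C^{down}$ token might be needed by several $v$'s. I would handle this by charging, for edge $e$ dropping with $u$, only to \emph{one} distinguished endpoint or by paying a factor $f$ (there are at most $f-1$ other endpoints of $e$), which is exactly where the $9f\beta^2$ rather than $9\beta^2$ comes from — $\beta^2$ from the geometric/slack gap $\tfrac12$ vs $1/\beta$, a factor roughly $2$ absorbed, one $f$ from spreading a drop across $e$'s endpoints, and the counting of $|\{e:\ell_v(e)\le j-1\}|$ being comparable to $\beta\cdot W_v$-type mass. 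I would also need to be careful that a FIX-UP-DIRTY on $v$ triggered purely by \UD at level $0$ with no incident insertion is impossible given Property~\ref{mainbody:prop:inv} held before the last update — tracing that back to at least one incident insertion. Assembling these, summing over all FIX-UP-DIRTY calls, each $C^{down}$ unit is charged $O(f)$ times and each update $O(f)$ times, yielding $C^{up}\le 9f\beta^2(T+C^{down})$.
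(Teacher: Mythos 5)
The key step you assert---that FIX-UP-DIRTY($v$) leaves $v$ \Clean, i.e., with $W_v^+ \le 1/\beta^2$ after the move---is false, and the per-epoch amortization you build on it does not hold up. The analogy to Claim~\ref{mainbody:cl:fix:down:dirty} does not transfer: in FIX-DOWN-DIRTY, $j$ is the \emph{largest} level with $W_{v\to j}(t) > 1/\beta^2$, which yields the complementary bound $W_{v\to (j+1)}(t) \le 1/\beta^2$ and hence $W^+_{v\to j}(t) \le W_{v\to (j+1)}(t) \le 1/\beta^2$. In FIX-UP-DIRTY, $j$ is the \emph{smallest} level with $W_{v\to j}(t) \le 1/\beta$; the complementary fact is only $W_{v\to (j-1)}(t) > 1/\beta$, which gives nothing better than $W^+_{v\to j}(t) \le W_{v\to j}(t) \le 1/\beta$. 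And indeed, if the up-weight of $v$ is spread thinly over many levels above $\ell(v)$, the smallest valid $j$ can lie far above $\ell(v)$, and the residual up-weight at level $j$ can be $\Theta(1/\beta)$, not $O(1/\beta^2)$.

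Without that up-weight slack, your charging breaks on the last up-jump of a phase when its level gap is large. Charging an up-jump from level $i$ to level $j$ against the weight accumulated at $v$ since the previous FIX call gives $\Delta^{up} \le O(\beta^{j-1})$ versus $\Delta^T + \Delta^{down} \ge \Omega(\beta^{i})$, a ratio of $O(\beta^{\,j-1-i})$ that is unbounded when $j-i$ is large. The paper avoids this by amortizing over \emph{phases} rather than epochs, so the baseline \Clean state with $W_v^+ \le 1/\beta^2$ always comes from a FIX-DOWN-DIRTY call at the start of the phase; the intermediate up-jumps are summed geometrically and bounded at once (Claim~\ref{clm:007}), and the final up-jump is handled by a case split on the level gap (Claim~\ref{cl:A:500}), where the large-gap case uses precisely the $W_v^+ \le 1/\beta^2$ slack from the phase start. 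Your bookkeeping of the $f$-factor and the $\beta^2$-factor is in the right spirit, but you need the phase decomposition (or a substitute argument for the large-gap last jump) before it can go through.
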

	  
	    The proofs of Lemmas~\ref{lm:counter:new},~\ref{lm:counter:down} and~\ref{lm:counter:up} appear in Sections~\ref{sec:lm:counter:new},~\ref{sec:lm:counter:down} and~\ref{sec:lm:counter:up} respectively. All these three proofs use the concepts of epochs, jumps and phases as defined in Section~\ref{sec:epoch}. The main result of our paper (see Theorem~\ref{th:main}) now follows from Theorem~\ref{mainbody:th:old:result},   Lemma~\ref{lm:counter:time} and Lemma~\ref{lm:counter}.

  \begin{lemma}[Corollary to Lemma~\ref{lm:counter:new},\ref{lm:counter:down}, and~\ref{lm:counter:up}.]
  	\label{lm:counter}
  	We have: $C^{up} =  \Theta(f) \cdot T$ and $C^{down} + I^{down} = \Theta(1)\cdot T$.
  \end{lemma}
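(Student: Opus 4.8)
The plan is to treat Lemmas~\ref{lm:counter:new}, \ref{lm:counter:down} and \ref{lm:counter:up} as black boxes and solve the resulting system of inequalities among the three counters, using the specific numerical choices of $\alpha$ and $\beta$ from \eqref{eq:parameter}. First I would combine Lemma~\ref{lm:counter:new} ($C^{down}\le I^{down}$) with Lemma~\ref{lm:counter:down} ($I^{down}\le \tfrac{f}{\alpha-1}(T+C^{up})$) to get $C^{down}\le \tfrac{f}{\alpha-1}(T+C^{up})$. Substituting this bound on $C^{down}$ into Lemma~\ref{lm:counter:up} ($C^{up}\le 9f\beta^2(T+C^{down})$) yields
\[
C^{up} \;\le\; 9f\beta^2\Bigl(T + \tfrac{f}{\alpha-1}(T+C^{up})\Bigr)
\;=\; 9f\beta^2\Bigl(1+\tfrac{f}{\alpha-1}\Bigr)T \;+\; \tfrac{9f^2\beta^2}{\alpha-1}\,C^{up}.
\]
The crucial point is that with $\beta=17$ and $\alpha-1 = 36f^2\beta^2$, the coefficient $\tfrac{9f^2\beta^2}{\alpha-1} = \tfrac{9f^2\beta^2}{36f^2\beta^2} = \tfrac14 < 1$, so the $C^{up}$ term can be absorbed into the left-hand side. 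Rearranging gives $\tfrac34\,C^{up} \le 9f\beta^2\bigl(1+\tfrac{f}{\alpha-1}\bigr)T$, and since $\tfrac{f}{\alpha-1}\le 1$ (indeed it is tiny), we obtain $C^{up} = O(f\beta^2)\cdot T = O(f)\cdot T$. The matching lower bound $C^{up}=\Omega(f)\cdot T$ is immediate: every insertion of a hyper-edge $e$ (there are at least a constant fraction of $T$ such, or one can simply note $C^{up}$ increments track the $\Theta(f)$ endpoint-updates forced by the very first placement of edges) contributes $\Theta(f)$ to $C^{up}$ — more carefully, I would argue that handling $T$ updates requires touching $\Omega(f)$ endpoints per update in the worst case, or alternatively just absorb this into the statement since the final running time only needs the upper bound; I'd state $C^{up}=\Theta(f)\cdot T$ citing that both the data-structure work and the potential-based lower bound give matching $\Theta(f)$ factors.

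Next I would bound the remaining counters. From $C^{up}=O(f)\cdot T$ and Lemma~\ref{lm:counter:down}, $I^{down}\le \tfrac{f}{\alpha-1}(T+C^{up}) = \tfrac{f}{\alpha-1}\cdot O(fT) = O\bigl(\tfrac{f^2}{\alpha-1}\bigr)T = O(1)\cdot T$, again using $\alpha-1=36f^2\beta^2 = \Theta(f^2)$ so that $\tfrac{f^2}{\alpha-1}=\Theta(1)$. Then Lemma~\ref{lm:counter:new} gives $C^{down}\le I^{down}=O(1)\cdot T$, hence $C^{down}+I^{down}=O(1)\cdot T$. For the lower bound direction of $C^{down}+I^{down}=\Omega(1)\cdot T$ one can again point to the worst case (or, if the paper only needs the running-time upper bound, note that the $\Theta$ in the statement can be read as $O$ where it matters); I would phrase this to match whatever is actually used downstream, since Lemma~\ref{lm:counter:time} combined with $C^{up}=O(f)T$ and $C^{down}+I^{down}=O(1)T$ already yields the claimed $O(f^2)$ amortized time.

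The only genuinely delicate point in this proof is the absorption step: it relies on $\tfrac{9f^2\beta^2}{\alpha-1}$ being bounded strictly below $1$, which is exactly why \eqref{eq:parameter} sets $\alpha = 1+36f^2\beta^2$ (the factor $36 = 4\cdot 9$ gives the slack factor $\tfrac14$). I would make this dependence explicit so the reader sees that the choice of constants in \eqref{eq:parameter} is not arbitrary but precisely tuned to close this circular system of inequalities. Everything else is routine substitution and arithmetic; no new combinatorial idea is needed here, as all the real work is deferred to the epoch/jump/phase analysis behind Lemmas~\ref{lm:counter:new}–\ref{lm:counter:up}.
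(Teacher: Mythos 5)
Your proposal is correct and follows essentially the same route as the paper: chain $C^{down}\le I^{down}\le \tfrac{f}{\alpha-1}(T+C^{up})$ into Lemma~\ref{lm:counter:up}, observe that the choice $\alpha-1=36f^2\beta^2$ makes the coefficient of $C^{up}$ exactly $1/4$, absorb, and then back-substitute to bound $I^{down}$ and $C^{down}$. Your side remark that the $\Theta$ in the lemma statement is really only used as an upper bound $O$ is also accurate --- the paper's own proof establishes only the upper-bound direction, which is all that Lemma~\ref{lm:counter:time} needs.
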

  \begin{proof}
Replacing $C^{down}$ in the RHS of Lemma~\ref{lm:counter:up} by the upper bounds from Lemmas~\ref{lm:counter:new} and~\ref{lm:counter:down}, we get:
\begin{eqnarray*}
C^{up} & \le & (9f\beta^2) \cdot T + (9f\beta^2) \cdot C^{down} \\
& \leq & (9f\beta^2) \cdot T + (9f\beta^2) \cdot I^{down} \\
& \leq & (9f\beta^2) \cdot T + \frac{(9f\beta^2) f}{(\alpha - 1)} \cdot (T+ C^{up}) \\
& \leq & (9f\beta^2) \cdot T + (1/4) \cdot T + (1/4) \cdot C^{up} \qquad \text{(see equation~\eqref{eq:parameter})}
\end{eqnarray*}
Rearranging the terms in the above inequality, we get: $(3/4) \cdot C^{up} \leq (9f \beta^2 + 1/4) \cdot T = (36f\beta^2+1) \cdot (T/4)$. Multiplying both sides by $(4/3)$, we get: $C^{up} \leq (12f\beta^2+1/3) \cdot T \leq (13f\beta^2)  T$. Since $\beta = 17$, we get:
\begin{equation}
\label{eq:new:900}
C^{up} \leq \Theta(f) \cdot T
\end{equation}
Since $\alpha = \Theta(f^2)$, Lemmas~\ref{lm:counter:new} and~\ref{lm:counter:down} and equation~\eqref{eq:new:900} imply that:
\begin{equation}
\label{eq:new:901}
C^{down} \leq I^{down} \leq \Theta(1) \cdot T
\end{equation}
  \end{proof}

 \subsection{Epochs, jumps and phases}
 \label{sec:epoch}
 Fix any node $v \in V$. An {\em epoch} of this node is a maximal time-interval during which the node stays at the same level. 
 An epoch ends when either (a) the node $v$ moves up to a higher level due to a call to FIX-UP-DIRTY, or (b) the node $v$ moves down to a lower level due to a call to the subroutine FIX-DOWN-DIRTY. These events are called {\em jumps}. Accordingly, there are {\sc Up-Jumps} and {\sc Down-Jump}s.  
 Next, we define a {\em phase}  of a node to be a maximal sequence of consecutive epochs where the levels of the node keep on increasing. The phase of a node $v$ is denoted by $\Phi_v$. Suppose that a phase $\Phi_v$ consists of $k$ consecutive epochs of $v$ at levels $i_1, \ldots, i_k \in \{0,1,\ldots, L\}$. Then we have: $i_1 < i_2 < \cdots < i_k$. 
By definition, the epoch immediately before $\Phi_v$ must have level larger than $i_1$ implying FIX-DOWN-DIRTY($v$) landed $v$ at level $i_1$. Similarly, the epoch subsequent to $i_k$ is smaller than $i_k$ implying FIX-DOWN-DIRTY($v$) is called again.

 \subsection{Proof of Lemma~\ref{lm:counter:new}}
 \label{sec:lm:counter:new}
 
 Suppose that a node $v$ moves down from (say) level $j$ to level $i < j$ at time (say) $t$ due to a call to the subroutine FIX-DOWN-DIRTY$(v)$. Let $\Delta^{down}$ and $\Delta_I^{down}$ respective denote the increase in the counters $C^{down}$ and $I^{down}$ due to this event. We will show that $\Delta^{down} \leq \Delta_I^{down}$, which will conclude the proof of the lemma. By definition, we have:
 \begin{equation}
 \label{eq:D:1}
 \Delta_{I}^{down} = \beta^{i-2}/\alpha
 \end{equation}
 Let $X = \{ e \in E_v : \ell_v(e) \leq i \}$ denote the set of hyper-edges incident on $v$ that contribute to the increase in $C^{down}$ due to the {\sc Down-Jump} of $v$ at time $t$. Specifically, we have: $|X| = \Delta^{down}$. Each edge $e \in X$ contributes a weight $\beta^{-i}$ towards the node-weight $W_{v \to i}(t)$. Thus, we get: $|X| \cdot \beta^{-i} \leq W_{v \to i}(t) \leq 1/(\alpha \beta^2)$. The last inequality holds since $v$ is {\sc Down-Dirty} in the beginning of time-step $t$. Rearranging the terms, we get: $\Delta^{down} = |X| \leq \beta^{i-2}/\alpha$. The lemma now follows from equation~\eqref{eq:D:1}.

 \subsection{Proof of Lemma~\ref{lm:counter:down}}
 \label{sec:lm:counter:down}
 
Suppose we call FIX-DOWN-DIRTY($v$) at some time $t_2$. Let $\ell(v) = i$ just before the call, and let $[t_1,t_2]$ be the epoch with level of $v$ being $i$.
Let $X := \{e \in E_v : \ell_v(e) \leq i\}$ at time $t_2$. By definition, $I^{down}$ increases by $\beta^{i-2}/\alpha$ during the execution of FIX-DOWN-DIRTY($v$); let us call this increase $\Delta_I^{down}$. Thus, we have:
  \begin{equation}
 \label{eq:x}
 \Delta_I^{down} = \beta^{i-2}/\alpha
 \end{equation}
 Consider the time between $[t_1,t_2]$ and let us address how $W_v$ can decrease in this time while $v$'s level is fixed at $i$. Either some hyper-edge incident on $v$ is deleted, or some hyper-edge $e \in E_v$ incident on it decreases its weight. In the latter case, the level $\ell(e)$ of such an hyper-edge $e$ must increase above $i$.
 Let $\Delta^T$ denote the number of hyper-edge deletions incident on $v$ during the time-interval $[t_1, t_2]$. 
 Let $\Delta^{up}$ denote the increase in the value of  $C^{up}$ during the time-interval $[t_1, t_2]$ {\em due to the hyper-edges incident on $v$}. 
 Specifically, at time $t_1$, we have $\Delta^T = \Delta^{up} = 0$. Subsequently, during the time-interval $[t_1, t_2]$, we increase the value of $\Delta^{up}$ by one each time we observe that a hyper-edge $e \in E_v$  increases its level $\ell(e)$ to something larger than $i$. 
 Note that $\ell(v) = i$ throughout the time-interval $[t_1, t_2]$. 
 Hence, each time we observe an unit increase in $\Delta^T + \Delta^{up}$, this decreases the value of $W_v$ by at most $\beta^{-i}$. Just before time $t_1$, the node $v$ made either an {\sc Up-Jump}, or a {\sc Down-Jump}. 
 Hence, Claims~\ref{mainbody:cl:fix:up:dirty} and~\ref{mainbody:cl:fix:down:dirty} imply that $W_{v\to i}(t_1) > 1/\beta^2$. 
 As $W_v(t_2) \leq 1/(\alpha \beta^2)$ at time $t_2$, we infer that  $W_v$ has dropped by at least $(1-1/\alpha) \cdot \beta^{-2}$ during the time-interval $[t_1, t_2]$.  In order to account for this drop in $W_v$, the value of $\Delta^T + \Delta^{up}$ must have increased by at least $(1-1/\alpha) \cdot \beta^{-2}/\beta^{-i} = (1-1/\alpha) \cdot \beta^{i-2}$. Since $\Delta^T = \Delta^{up} = 0$ at time $t_1$, at time $t_2$ we get: $\Delta^T + \Delta^{up} \geq (1-1/\alpha) \cdot \beta^{i-2}$.  Hence, \eqref{eq:x} gives us:
 \begin{equation}
 \label{eq:x:1}
 \Delta_I^{down} \leq (\alpha -1)^{-1} \cdot (\Delta^T + \Delta^{up})
 \end{equation}
 
 Each time the value of $I^{down}$ increases due to FIX-DOWN-DIRTY on some node, inequality~\eqref{eq:x:1} applies. 
 If we sum all these inequalities, then the left hand side (LHS) will be exactly equal to the final value of $I^{down}$, and the right hand side (RHS) will be at most $(\alpha -1 )^{-1} \cdot (f \cdot T + (f-1) \cdot C^{up})$. The factor $f$ appears in front of $T$ because each hyper-edge deletion can contribute $f$ times to the sum $\sum \Delta^T$, once for each of its endpoints. Similarly, the factor $(f-1)$ appears in front of $C^{up}$ because whenever the level of an hyper-edge $e$ moves up due to the increase in the level $\ell(v)$ of some endpoint $v \in e$, this contributes at most $(f-1)$ times to the sum $\sum \Delta^{up}$, once for every other endpoint $u \in e, u \neq v$. Since LHS $\leq$ RHS, we get: $I^{down} \leq (\alpha - 1)^{-1} \cdot (f \cdot T + (f-1) \cdot C^{up}) \leq  (f/(\alpha - 1)) \cdot (T+C^{up})$. This concludes the proof of the lemma.

 \subsection{Proof of Lemma~\ref{lm:counter:up}}
 \label{sec:lm:counter:up}
Fix a node $v$ and consider a phase $\Phi_v$ where $v$ goes through levels $i_1 < i_2 < \cdots < i_k$. Thus, the node  $v$ enters the level $i_1$ at time $t_1$ (say) due to a call to FIX-DOWN-DIRTY$(v)$. For $r \in [2, k]$, the node $v$ performs an {\sc Up-Jump} at time $t_r$ (say) from the level $i_{r-1}$ to the level $i_r$, due to a call to FIX-UP-DIRTY$(v)$. This implies that $t_1 < t_2 < \cdots < t_k$. The phase ends, say, at time $t_{k+1} > t_k$  when the node $v$ again performs a {\sc Down-Jump} from the level $i_k$ due to a call to FIX-DOWN-DIRTY$(v)$. 

Let $\Delta^{up}$ denote the total increase in the value of the counter $C^{up}$ due to  the phase $\Phi_v$. For $r \in [2, k]$, let $\Delta^{up}_r$ denote the increase in the value of the counter $C^{up}$ due to the {\sc Up-Jump} of $v$ at time $t_r$. Thus, we have: 
\begin{equation}
\label{eq:delta:new}
\Delta^{up} = \sum_{r = 2}^k \Delta^{up}_r
\end{equation}

We define two more counters: $\Delta^T$, $\Delta^{down}$. The former counter equals the number of hyper-edge insertions/deletions incident on $v$ during the time-interval $[t_1, t_k]$. The latter counter equals the increase in the value of $C^{down}$ due to the hyper-edges incident on $v$ during the time-interval $[t_1, t_k]$. Alternately, these two counters can be defined as follows. At time $t_1$, we set $\Delta^T \leftarrow 0$ and $\Delta^{down} \leftarrow 0$. Subsequently, whenever at any time $t \in [t_1, t_k]$, a hyper-edge incident on $v$ gets inserted into or deleted from the input graph, we increment the value of $\Delta^T$ by one. Further, whenever at any time $t \in [t_1, t_k]$, a hyper-edge $e$ incident on $v$ gets its level decreased because of a {\sc Down-Jump} of some node $u \in e, u \neq v$, we increment the value of $\Delta^{down}$ by one. 

Since $v$ enters the level $i_1$ at time $t_1$ due to a call to FIX-DOWN-DIRTY$(x)$,  Claim~\ref{mainbody:cl:fix:down:dirty} implies that:
 \begin{equation}\label{eq:007}
 W_{v\to i_1}({t_1}) \leq 1/\beta ~~~~~~~~~~ \textrm{ and } ~~~~~~~~~~~ W^+_{v\to i_1}({t_1}) \leq 1/\beta^2
 \end{equation}
 
Our main goal is to upper bound $\Delta^{up}$ in terms of the final values of the counters $\Delta^T$ and $\Delta^{down}$. 

\begin{claim}\label{clm:006}
For $2\le r \le k$, $\Delta^{up}_r \leq \beta^{i_r - 1}$.
\end{claim}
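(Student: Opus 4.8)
The plan is to bound the number of hyper-edges that get charged to the \textsc{Up-Jump} of $v$ at time $t_r$ by controlling the weight $W_v$ faces right before that jump. Recall that $\Delta^{up}_r$ counts, for the move of $v$ from level $i_{r-1}$ to level $i_r$, every hyper-edge $e\in E_v$ with $\ell_v(e)\le i_r-1$ (see the definition of the counter $C^{up}$). The key point is that each such hyper-edge $e$ contributes weight exactly $\beta^{-\max(\ell_v(e),\, i_{r-1})}$ to $W_{v}(t_r)$, and since $\ell_v(e)\le i_r-1$ while $i_{r-1}\le i_r-1$, this contribution is at least $\beta^{-(i_r-1)}$. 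Hence
\[
\Delta^{up}_r \cdot \beta^{-(i_r-1)} \;\le\; W_{v}(t_r).
\]
So it suffices to show $W_{v}(t_r)\le 1$, which then gives $\Delta^{up}_r\le \beta^{i_r-1}$ after rearranging.

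First I would handle the base-ish step $r=2$ separately or fold it into the general argument: at time $t_r$ the subroutine FIX-UP-DIRTY$(v)$ is being called, so $v$ is \textsc{Up-Dirty} at level $i_{r-1}$ just before $t_r$. If $i_{r-1}>0$ then by Definition~\ref{mainbody:def:down:dirty} being \textsc{Up-Dirty} only tells us $W_v(t_r)\ge 1$, which is the wrong direction; so the real content is the \emph{upper} bound on $W_v(t_r)$. For this I would track how $W_v$ evolves across the epoch $[t_{r-1},t_r]$ during which $v$ sits at level $i_{r-1}$. At the start of that epoch, either $r-1=1$ and Claim~\ref{mainbody:cl:fix:down:dirty} (via~\eqref{eq:007}) gives $W_{v\to i_1}(t_1)\le 1/\beta$, or $r-1\ge 2$ and Claim~\ref{mainbody:cl:fix:up:dirty} gives $W_v$ just after the \textsc{Up-Jump} at $t_{r-1}$ is at most $1/\beta$. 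Either way, $v$ enters level $i_{r-1}$ with weight at most $1/\beta$.

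Now during the epoch $[t_{r-1},t_r]$ the weight $W_v$ can only increase through (i) insertions of hyper-edges incident on $v$ and (ii) level \emph{decreases} of hyper-edges already incident on $v$ (caused by \textsc{Down-Jump}s of other endpoints). I need the total such increase over the epoch to be at most $1-1/\beta$, which is where the argument has to be a little careful — this is the step I expect to be the main obstacle. A clean way is to argue that $v$ cannot have been \textsc{Dirty} at any point strictly inside the epoch (the algorithm in Figure~\ref{mainbody:fig:fix:dirty} fixes dirty nodes immediately, and \textsc{Up-Dirty} nodes have priority), so just before $t_r$ is the \emph{first} moment in the epoch that $v$ becomes \textsc{Up-Dirty}; in particular, the single operation at time $t_r$ is what pushed $W_v$ from below the \textsc{Up-Dirty} threshold (which is $<1$ for $i_{r-1}>0$, namely $W_v<1$, and $\le 1/\beta^2$ for $i_{r-1}=0$) up to its value $W_v(t_r)$. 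A single insertion raises $W_v$ by at most $\beta^{-i_{r-1}}\le 1$, and a single hyper-edge level-decrease raises it by at most $\beta^{-i_{r-1}}\cdot(\beta-1)\le \beta-1$. Combining the pre-operation bound ($W_v<1$ if $i_{r-1}>0$, or $\le 1/\beta^2$ if $i_{r-1}=0$) with the at-most-one-step jump, and being slightly more careful in the $i_{r-1}=0$ case where a level-0 incident edge has weight $1$, one gets $W_v(t_r)\le 1+ (\beta-1)\beta^{-i_{r-1}}$ in the worst case — which is not quite $\le 1$.

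So the honest version: I would instead bound $W_{v\to i_r}(t_r)$ rather than $W_{v\to i_{r-1}}(t_r)$. The hyper-edges counted by $\Delta^{up}_r$ are exactly those with $\ell_v(e)\le i_r-1$, and each contributes at least $\beta^{-(i_r-1)}$ to $W_{v\to i_{r-1}}(t_r)$; but I can also observe that after $v$ moves to $i_r$, FIX-UP-DIRTY chose $i_r$ minimal with $W_{v\to i_r}(t_r)\le 1/\beta$. The cleanest route, and the one I would commit to, is: $\Delta^{up}_r\cdot\beta^{-(i_r-1)} \le W_{v\to i_{r-1}}(t_r)$, and then separately show $W_{v\to i_{r-1}}(t_r)\le \beta\cdot W_{v\to i_r}(t_r)\cdot\beta^{\,i_r-i_{r-1}-\text{(something)}}$ is too lossy — so instead bound $W_{v\to i_{r-1}}(t_r)$ directly by noting it equals $W_v$ at level $i_{r-1}$ just before the fixing call, which is at most $1$ for $i_{r-1}>0$ \emph{plus} the contribution of the triggering operation. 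The right constant is absorbed because the claim only asserts $\le \beta^{i_r-1}$, and in fact from $W_{v\to i_{r-1}}(t_r)\le \beta^{-(i_{r-1})}\cdot(\text{number of incident edges at levels}\le i_{r-1})$ together with the \textsc{Up-Dirty} trigger one sees the count of relevant edges is at most $\beta^{i_{r-1}}\le\beta^{i_r-1}$ in the $i_{r-1}=0$ edge-present case and at most $\beta^{i_r-1}$ generally; I would carry out exactly this bookkeeping, using Claims~\ref{mainbody:cl:fix:up:dirty} and~\ref{mainbody:cl:fix:down:dirty} to control the weight at the \emph{start} of the epoch and the immediacy of fixing to control its growth, to conclude $\Delta^{up}_r\le\beta^{i_r-1}$.
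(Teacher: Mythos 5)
Your charging inequality $\Delta^{up}_r \cdot \beta^{-(i_r-1)} \le W_v(t_r)$ is correct, but the plan to couple it with $W_v(t_r)\le 1$ cannot work: for $i_{r-1}>0$ the node $v$ is \textsc{Up-Dirty} at time $t_r$, which by Definition~\ref{mainbody:def:down:dirty} means $W_v(t_r)\ge 1$, and Claim~\ref{cl:new:002} only caps it strictly below~$2$. So at best this line of reasoning gives $\Delta^{up}_r < 2\beta^{i_r-1}$, which is off by a factor of $2$; you yourself notice this and then drift through several alternative bookkeepings without closing the gap (the final step, bounding the count by the number of incident edges at levels $\le i_{r-1}$, is also too small a set --- $\Delta^{up}_r$ counts edges with $\ell_v(e)\le i_r-1$, a strictly larger collection when $i_r>i_{r-1}+1$). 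The overall proposal is therefore incomplete.

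The missing idea is to evaluate the weight \emph{after} the \textsc{Up-Jump} rather than before it. Once $v$ is placed at level $i_r$, every hyper-edge $e\in E_v$ counted by $\Delta^{up}_r$ satisfies $\ell_v(e)\le i_r-1<i_r$, so its level becomes exactly $\ell(e)=i_r$ and its weight becomes exactly $\beta^{-i_r}$; these edges alone therefore contribute $\Delta^{up}_r\cdot\beta^{-i_r}$ to $W_{v\to i_r}(t_r)$. Claim~\ref{mainbody:cl:fix:up:dirty} guarantees $W_{v\to i_r}(t_r)\le 1/\beta$, and rearranging gives $\Delta^{up}_r\le\beta^{i_r-1}$ on the nose. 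This sidesteps the slack-of-$2$ problem entirely, because the FIX-UP-DIRTY stopping rule controls the post-jump weight tightly, whereas the pre-jump weight is inherently loose (it is precisely what being \textsc{Up-Dirty} lets grow past~$1$).
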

\begin{proof}
	By Claim~\ref{mainbody:cl:fix:up:dirty} we have  $W_{v\to i_r}(t_r) \leq 1/\beta$, that is, the total weight incident on $v$ after it has gone through FIX-UP-DIRTY at time $t_r$ is at most $1/\beta$. Now, each hyper-edge $e \in E_v$ which contributes to $\Delta^{up}_r$ has weight, right after time ${t_r}$, precisely $\beta^{-\ell(v)} = \beta^{-i_r}$.
	Putting together, we get $\Delta^{up}_r \leq \beta^{i_r - 1}$.
\end{proof}

Using the above claim, we get the following upper bound on the sum of all but the last $\Delta^{up}_k$.
\begin{claim}\label{clm:007}
We have: $\sum_{r=2}^{k-1} \Delta^{up}_r < 2 (\Delta^T + \Delta^{down})$.
\end{claim}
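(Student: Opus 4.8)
The plan is to bound the middle up-jumps $\Delta^{up}_r$ for $r\in[2,k-1]$ by relating each one to the weight that must have been \emph{added} to $v$ between consecutive up-jumps. The key point is that after FIX-UP-DIRTY places $v$ at level $i_{r-1}$ at time $t_{r-1}$, Claim~\ref{mainbody:cl:fix:up:dirty} guarantees $W_{v\to i_{r-1}}(t_{r-1})\le 1/\beta$ (and for $r=2$, equation~\eqref{eq:007} gives the same at level $i_1$); yet at time $t_r$ the node is \UD, so $W_{v\to i_{r-1}}(t_r)\ge 1$ (using $i_{r-1}\ge 1$ for $r\ge 3$, and for $r=2$ one either has $i_1>0$ and $W\ge1$, or $i_1=0$ and $W>1/\beta^2$; in that last case I would instead compare against the weaker $1/\beta^2$ bound, which still leaves a constant additive gap). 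Hence, during $(t_{r-1},t_r]$, the weight $W_v$ (measured at $v$'s current level $i_{r-1}$) rose by at least $1-1/\beta = \Theta(1)$. Each unit of weight increase at level $i_{r-1}$ comes from either a hyper-edge insertion incident on $v$ (contributing $\beta^{-i_{r-1}}$, and counted by $\Delta^T$) or a \DD-jump of some other endpoint $u\in e$ that pulls $e$'s level down to $i_{r-1}$ (contributing at most $\beta^{-i_{r-1}}$, and counted by $\Delta^{down}$). So the number of such events in $(t_{r-1},t_r]$ is at least $(1-1/\beta)\beta^{i_{r-1}}$.

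Next I would combine this with Claim~\ref{clm:006}: $\Delta^{up}_r\le \beta^{i_r-1}\le \beta^{i_{r-1}}$ since the levels are strictly increasing and $\beta\ge 2$ — wait, more carefully $i_r\ge i_{r-1}+1$ only gives $\beta^{i_r-1}\ge\beta^{i_{r-1}}$, the wrong direction. So instead I charge $\Delta^{up}_r$ against the weight-increase that happened \emph{just before} the \emph{next} up-jump isn't quite it either. The clean way: charge $\Delta^{up}_r$ (for $2\le r\le k-1$) to the events in the interval $(t_r, t_{r+1}]$. After $t_r$, $v$ sits at level $i_r$ with $W_{v\to i_r}(t_r)\le 1/\beta$; at $t_{r+1}$ it is \UD at level $i_r$, so $W$ rose by $\ge 1-1/\beta$, forcing at least $(1-1/\beta)\beta^{i_r}$ insertion/down-jump events incident on $v$ in $(t_r,t_{r+1}]$. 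Since $\Delta^{up}_r\le\beta^{i_r-1}=\beta^{i_r}/\beta$, we get $\Delta^{up}_r \le \frac{1}{\beta(1-1/\beta)}\cdot(\#\text{events in }(t_r,t_{r+1}]) = \frac{1}{\beta-1}\cdot(\#\text{events in }(t_r,t_{r+1}])$. The intervals $(t_r,t_{r+1}]$ for $r=2,\dots,k-1$ are disjoint and contained in $[t_1,t_k]$, so summing gives $\sum_{r=2}^{k-1}\Delta^{up}_r \le \frac{1}{\beta-1}(\Delta^T+\Delta^{down}) < 2(\Delta^T+\Delta^{down})$ since $\beta=17$.

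I would write this up as: fix $r\in[2,k-1]$; by Claim~\ref{mainbody:cl:fix:up:dirty}, $W_{v\to i_r}(t_r)\le 1/\beta$, and since $v$ is \UD at level $i_r\ge 1$ at time $t_{r+1}$ we have $W_{v\to i_r}(t_{r+1})\ge 1$ (note $i_r>0$ as $i_r>i_1\ge 0$ for $r\ge 2$, so the $\ell(v)=0$ case of \UD is impossible). Thus $W_v$ increased by at least $1-1/\beta$ over $(t_r,t_{r+1}]$ while $v$ stayed at level $i_r$. Every elementary change to $W_v$ in this interval is $\pm\beta^{-i_r}$ (weight of a level-$i_r$ hyper-edge incident on $v$), and a \emph{positive} change occurs only via an incident hyper-edge insertion (tracked by $\Delta^T$) or a \DD-jump of another endpoint (tracked by $\Delta^{down}$); each such positive event raises $W_v$ by at most $\beta^{-i_r}$. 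Hence the number of $\Delta^T$-or-$\Delta^{down}$ events in $(t_r,t_{r+1}]$ is at least $(1-1/\beta)\beta^{i_r}$. Combined with $\Delta^{up}_r\le\beta^{i_r-1}$ from Claim~\ref{clm:006}, this yields $\Delta^{up}_r\le \frac{\beta^{i_r-1}}{(1-1/\beta)\beta^{i_r}}\cdot(\#\text{events in }(t_r,t_{r+1}]) = \frac{1}{\beta-1}\cdot(\#\text{events in }(t_r,t_{r+1}])$; summing over the disjoint intervals $(t_r,t_{r+1}]\subseteq[t_1,t_k]$ for $2\le r\le k-1$ and using $\beta-1=16>1/2$ gives the claim.

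The main obstacle I anticipate is the bookkeeping at the phase boundaries and the $r=2$ edge case: one must make sure that using $(t_r, t_{r+1}]$ (rather than $(t_{r-1},t_r]$) to charge $\Delta^{up}_r$ keeps all charging intervals disjoint and within $[t_1,t_k]$ — which it does, since we only charge $r\le k-1$ and the last up-jump $\Delta^{up}_k$ is deliberately excluded from this claim (it is presumably handled separately, absorbed into the per-phase-ending \DD-jump accounting). A secondary subtlety is confirming that at $t_{r+1}$ the relevant quantity is genuinely $W_{v}$ at level $i_r$ and that the ``weight can only drop by a factor $\beta$ per single-level move'' reasoning is not needed here — we only need that between $t_r$ and $t_{r+1}$ node $v$ does not change level, so $W_v = W_{v\to i_r}$ throughout, which is exactly the definition of the epoch $[t_r,t_{r+1}]$.
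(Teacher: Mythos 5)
Your proof is correct, and it takes a genuinely different route from the paper's. The paper first collapses the whole sum via a geometric-series bound, $\sum_{r=2}^{k-1}\Delta^{up}_r \le \sum_{r=2}^{k-1}\beta^{i_r-1}\le \beta^{i_{k-1}}$ (the last term dominates because the $i_r$ are strictly increasing), and then runs a \emph{single} weight-tracking argument for the hypothetical quantity $W_{v\to i_{k-1}}$ over the entire interval $[t_1,t_k]$: it starts at most $1/\beta$ by equation~\eqref{eq:007}, is at least $1$ at $t_k$ because $v$ is {\sc Up-Dirty} at level $i_{k-1}>0$, each positive bump is at most $\beta^{-i_{k-1}}$, hence $\Delta^T+\Delta^{down}\ge(1-1/\beta)\beta^{i_{k-1}}>\beta^{i_{k-1}}/2$, and the two bounds combine. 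You instead charge each $\Delta^{up}_r$ separately to the epoch $(t_r,t_{r+1}]$ that immediately \emph{follows} the jump: on that epoch $W_v=W_{v\to i_r}$ rises from $\le 1/\beta$ (Claim~\ref{mainbody:cl:fix:up:dirty}) to $\ge 1$ (since $v$ is {\sc Up-Dirty} at $i_r\ge 1$), each positive change is $\le\beta^{-i_r}$ and is tallied by $\Delta^T$ or $\Delta^{down}$, and combining with $\Delta^{up}_r\le\beta^{i_r-1}$ from Claim~\ref{clm:006} gives $\Delta^{up}_r\le\frac{1}{\beta-1}\cdot(\#\text{events in }(t_r,t_{r+1}])$; disjointness of the intervals then yields $\sum_{r=2}^{k-1}\Delta^{up}_r\le\frac{1}{\beta-1}(\Delta^T+\Delta^{down})$. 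Both arguments are sound. The paper's version is shorter and avoids per-epoch bookkeeping; yours is more modular, avoids reasoning about $W_{v\to i_{k-1}}$ while $v$ is not at level $i_{k-1}$, and actually gives the sharper constant $\frac{1}{\beta-1}=\frac{1}{16}$ in place of $2$. Your in-line self-correction (switching from charging $(t_{r-1},t_r]$ to charging $(t_r,t_{r+1}]$) is exactly the right fix, since $\Delta^{up}_r$ scales with $\beta^{i_r}$, not $\beta^{i_{r-1}}$.
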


\begin{proof}
If $k = 2$, then we have an empty sum $\sum_{r = 2}^{k-1} \Delta^{up}_r = 0$, and hence the claim is trivially true. For the rest of the proof, we suppose that $k > 2$, which implies that $i_{k-1} \geq i_2 > i_1 \geq 0$. Thus, we get:
\begin{equation}
\label{eq:donald:trump}
k > 2 \text{ and } i_{k-1} > 0.
\end{equation}
 
 To continue with the proof, summing over the inequalities from Claim~\ref{clm:006}, we get:
\begin{equation}
\label{eq:new:110}
\sum_{r = 2}^{k-1} \Delta^{up}_r \leq \beta^{i_{k-1}}
\end{equation}

Since the node  $v$ performs an {\sc Up-Jump} at time $t_{k-1}$ from level $i_{k-1} > 0$ (see equation~\eqref{eq:donald:trump}), the node must be \UD at that time. It follows that $W_{v\to i_{k-1}}(t_{k}) > 1$. From equation~\eqref{eq:007}, we have 
$W_{v \to i_{k-1}}(t_1) \leq W_{v\to i_{1}}(t_1) < 1/\beta$. Thus, during the time interval $[t_1,t_k]$ the value of $W_{v \to i_{k-1}}$
increases by at least $(1-1/\beta)$.
This can be either due to (a) some hyper-edge incident to $v$ being inserted, or (b) some hyper-edge $e \in E_v$ gaining its weight because of some endpoint $u \in e, u \neq v,$ going down. The former increases $\Delta^T$ and the latter increases  $\Delta^{down}$.
Furthermore, the increase in $W_{v \to i_{k-1}}$ due to every such hyper-edge  is at most $\beta^{-i_{k-1}}$. This gives us the following lower bound.
\begin{equation}
\label{eq:008}
\Delta^T + \Delta^{down} \geq (1 - 1/\beta)\cdot \beta^{i_{k-1}} > \beta^{i_{k-1}}/2 ~~ \textrm{ and so from equation~\eqref{eq:new:110},  we have}~~~~ \sum_{r=2}^{k-1} \Delta^{up}_r < 2(\Delta^T + \Delta^{down})
\end{equation}
The claim follows from equation~\eqref{eq:008}.
\end{proof}

It now remains to upper bound  $\Delta^{up}_k$. This is done in Claim~\ref{cl:A:500}, whose proof appears in Section~\ref{sec:new:claim}.

\begin{claim}
\label{cl:A:500}
We have: $\Delta^{up}_k < (8\beta^2) \cdot (\Delta^T + \Delta^{down})$.
\end{claim}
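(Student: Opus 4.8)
We need to bound $\Delta^{up}_k$, the increase in $C^{up}$ caused by the very last up-jump of the phase $\Phi_v$ (at time $t_k$, from level $i_{k-1}$ to level $i_k$). By Claim~\ref{clm:006}, $\Delta^{up}_k \leq \beta^{i_k - 1}$, so it suffices to show that $\beta^{i_k}$ is $O(\beta^2)\cdot(\Delta^T + \Delta^{down})$; the whole game is thus to lower-bound $\Delta^T + \Delta^{down}$ by roughly $\beta^{i_k - 2}$. The natural approach mirrors the proof of Claim~\ref{clm:007}, but with a crucial difference: when we do the up-jump at time $t_{k-1}$ landing at $i_{k-1}$, Claim~\ref{mainbody:cl:fix:up:dirty} only tells us $W_{v \to i_{k-1}}(t_{k-1}^{+}) > 1/\beta^2$, not the much larger lower bound ($> 1$) we get at $t_k$. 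So I cannot directly compare the weight at level $i_k$ across the interval $[t_1, t_k]$ the way Claim~\ref{clm:007} compares weight at level $i_{k-1}$. The fix is to track $W_{v \to i_k}$ over the sub-interval $[t_{k-1}, t_k]$ (or possibly $[t_1,t_k]$) and argue that the up-dirtiness that triggers the jump at $t_k$ forces $W_{v\to i_k}(t_k)$ to be large, while right after the previous up-jump it was small.

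Concretely, here is the key computation I would run. At time $t_k$ the node is \UD at level $i_{k-1}$, so $W_{v\to i_{k-1}}(t_k) \geq 1$ (when $i_{k-1} > 0$; the $i_{k-1}=0$ case has $W_{v\to 0}(t_k) > 1/\beta^2$ and needs separate, easier handling). Since moving $v$ up from $i_{k-1}$ to $i_k$ shrinks each edge weight by at most a factor $\beta^{i_k - i_{k-1}}$, but also note the relationship $W_{v\to i_k}(t_k) \ge \beta^{-(i_k - i_{k-1})} W_{v \to i_{k-1}}(t_k)$ is the wrong direction — instead I want: the \emph{reason} $v$ did not stop at level $i_k - 1$ tells us $W_{v\to (i_k-1)}(t_k) > 1/\beta$, hence $W_{v\to i_k}(t_k) > 1/\beta^2$ (dropping at most a factor $\beta$). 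That is just Claim~\ref{mainbody:cl:fix:up:dirty} again and only gives $\beta^{i_k-2}$ many edges at level $i_k$ — but those edges were \emph{put there by this very jump}, so they are exactly what $\Delta^{up}_k$ counts; that's circular. The honest route: compare $W_{v \to i_k}$ at time $t_{k-1}^{+}$ (just after the $(k-1)$-st up-jump, if $k \ge 3$; or at time $t_1$ if $k = 2$) versus at time $t_k$. After the jump at $t_{k-1}$ we have by Claim~\ref{mainbody:cl:fix:up:dirty} that $W_{v\to i_{k-1}}(t_{k-1}^{+}) \le 1/\beta$; since $i_k > i_{k-1}$, moving further up only decreases weight, so $W_{v\to i_k}(t_{k-1}^{+}) \le 1/\beta$. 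Meanwhile at $t_k$, since $v$ is \UD at $i_{k-1}$ and does not stop below $i_k$, we get $W_{v\to i_k}(t_k) > 1/\beta^2$ — still too weak.

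So the real content must be: the jump at $t_k$ happens because $W_v$ at level $i_{k-1}$ \emph{reached} $1$, starting from at most $1/\beta$ right after $t_{k-1}$; that rise of $\ge 1 - 1/\beta$ in $W_{v\to i_{k-1}}$ over $[t_{k-1}, t_k]$ is driven by edge insertions (counted by $\Delta^T$) and by neighbors dropping (counted by $\Delta^{down}$), each contributing at most $\beta^{-i_{k-1}}$ to $W_{v\to i_{k-1}}$. Hence $\Delta^T + \Delta^{down} \geq (1 - 1/\beta)\beta^{i_{k-1}} > \beta^{i_{k-1}}/2$ over $[t_{k-1},t_k]$. That bounds things in terms of $\beta^{i_{k-1}}$, but I need $\beta^{i_k}$, and $i_k$ can be much larger than $i_{k-1}$. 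Bridging that gap is the main obstacle, and I expect it is resolved by looking at $W^+$: after FIX-DOWN-DIRTY at $t_1$ we had $W^+_{v\to i_1}(t_1) \le 1/\beta^2$ by \eqref{eq:007}, and throughout the phase $v$ only moves up, so the edges that end up strictly above $v$ were either inserted, or pushed up, or were the up-weight present at $t_1$; crucially, $W_{v\to i_k}(t_k) > 1/\beta^2$ together with the fact that essentially all of this weight sits at level $i_k$ exactly (edges strictly above would have been counted in $W^+$, which was small at $t_1$ and grows only via $\Delta^T$-type insertions) forces $\Theta(\beta^{i_k - 2})$ worth of contributing events. I would make this precise by a telescoping/amortization over the edges incident to $v$ that live at level $i_k$ at time $t_k$: each such edge either was inserted during $[t_1,t_k]$, or was raised to level $i_k$ during $[t_1,t_k]$ by a neighbor's up-jump (contributing to $\Delta^{up}$, already handled), or contributed to $W^+_{v\to i_1}(t_1) \le 1/\beta^2$ and hence numbers at most $\beta^{i_1 - 2} \le \beta^{i_k-2}$... and I would close the loop by showing this last quantity is itself controlled. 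The cleanest packaging is probably the epoch-by-epoch argument the paper sets up in Section~\ref{sec:epoch}, applied to the single last epoch $[t_k, t_{k+1}]$ of the phase, feeding in \eqref{eq:007} and Claims~\ref{mainbody:cl:fix:up:dirty}, \ref{mainbody:cl:fix:down:dirty}; the constant $8\beta^2$ strongly suggests a loss of one factor $\beta$ from a single-level weight drop, one factor $\beta$ from the gap between the $1/\beta$ and $1/\beta^2$ thresholds, and the remaining factor $8$ from summing a geometric series plus the $(1-1/\beta)^{-1} < 2$ slack. I will present the argument in that order: (i) reduce to lower-bounding $\Delta^T + \Delta^{down}$ by $\Omega(\beta^{i_k-2})$; (ii) classify the level-$i_k$ edges at $t_k$; (iii) bound the "old up-weight" class using \eqref{eq:007} and absorb it, bound the "pushed-up" class into $\Delta^{down}$/$\Delta^{up}$, and bound the "inserted" class into $\Delta^T$; (iv) collect constants.
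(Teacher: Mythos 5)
Your plan correctly diagnoses the core difficulty --- the weight--rise argument over $[t_1,t_k]$ at level $i_{k-1}$ only yields $\Delta^T+\Delta^{down}\gtrsim\beta^{i_{k-1}}$, whereas Claim~\ref{clm:006} bounds $\Delta^{up}_k$ in terms of $\beta^{i_k}$, and $i_k$ can be much larger than $i_{k-1}$ --- and it also correctly intuits that \eqref{eq:007} and the up-weight $W^+$ must enter. But the proposal never closes this gap, and the sketch it gives for doing so contains real confusions, so as written this is a plan rather than a proof.

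Concretely, the paper bridges the gap by proving a lower bound on $W^+_{v\to(i_k-3)}(t_k)$, not on $W_{v\to i_k}(t_k)$: because $v$ refused to stop at level $i_k-1$, one has $W_{v\to(i_k-1)}(t_k)>1/\beta$ (Claim~\ref{cl:new:simple}), and then a short contradiction argument (Claim~\ref{cl:new:simple:1}) shows $W^+_{v\to(i_k-3)}(t_k)>1/(2\beta)$. That contradiction argument crucially uses the global invariant $W_v(t)<2$ for all $t$ (Claim~\ref{cl:new:002}), whose proof relies on the scheduling rule that \UD nodes are fixed before \DD ones --- an ingredient your proposal never mentions but cannot do without. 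You would also need the case split $i_k\leq i_{k-1}+3$ versus $i_k>i_{k-1}+3$: for a small gap the $W^+$ argument does not produce the needed $\beta^{i_k}$ factor (the $\beta^2$ weight-drop between levels $i_k-3$ and $i_k-1$ is the entire source of slack), and the paper falls back to a Claim~\ref{clm:007}--style bound at $i_{k-1}$. Finally, your three-way classification of the edges accounting for the growth of weight at level $i_k$ has a sign error: an up-jump of a neighbor $u$ of $v$ only \emph{decreases} $w(e)$ and hence cannot raise $W^+_{v\to\cdot}$; the relevant positive contribution from other nodes is through neighbors' \emph{down}-jumps, which is exactly what $\Delta^{down}$ counts. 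The paper's corresponding trichotomy is: (1) insertions of $e\in E_v$ with $\ell_v(e)>i_k-3$ (charged to $\Delta^T$); (2) down-jumps of $u\in e$ leaving $\ell_v(e)>i_k-3$ (charged to $\Delta^{down}$); and (3) edges that had $\ell_v(e)\le i_k-3$ at $t_1$ but $\ell_v(e)>i_k-3$ at $t_k$, whose total contribution to $W^+_{v\to(i_k-3)}(t_k)$ is bounded by $1/\beta^2$ \emph{in weight} (not in count) via the factor-$\beta$ shrinkage past level $i_k-3$ together with $W_{v\to i_1}(t_1)\le 1/\beta$ from \eqref{eq:007}; your ``count at most $\beta^{i_1-2}$'' bound for this class is both not of the right form and not tied back to $\Delta^T+\Delta^{down}$. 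Without the $W_v<2$ claim, the case split, and the correct handling of class (3), the argument does not go through.
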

From equations~\ref{eq:parameter},~\ref{eq:delta:new} and Claims~\ref{clm:007},~\ref{cl:A:500},  we get:
\begin{equation}\label{eq:case1}
\Delta^{up} \le (8\beta^2 + 2)(\Delta^T + \Delta^{down}) \leq (9\beta^2) \cdot (\Delta^T + \Delta^{down})
\end{equation}
Using equation~\eqref{eq:case1}, now we can prove Lemma~\ref{lm:counter:up}. 
For every phase of a node $v$, as per equation~\eqref{eq:case1} we can charge the increase in $C^{up}$ to the increase in $(T + C^{down})$ corresponding to hyper-edges incident of $v$. Summing up over all nodes and phases, the LHS gives $C^{up}$ while the RHS gives $(9\beta^2) \cdot (f \cdot T + (f-1) \cdot C^{down})$. The coefficient $f$ before $T$ comes from the fact that every hyper-edge insertion can contribute $f$ times to the RHS, once for each of its endpoints. The coefficient $(f-1)$ before $C^{down}$ comes from the fact that whenever the level of a hyper-edge $e$ decreases due to the {\sc Down-Jump} of a node $u \in e$, this event contributes at most $(f-1)$ times to the RHS: once for every other endpoint $v \in e, v \neq u$. Thus, we get:
$$C^{up} \le (9\beta^2) \cdot (f \cdot T + (f-1) \cdot C^{down})  \leq 9f\beta^2 \cdot (T+ C^{down})$$

\subsubsection{Proof of Claim~\ref{cl:A:500}}
\label{sec:new:claim}
We fork into two cases. \smallskip

\paragraph{Case 1:} $i_k \leq i_{k-1} + 3$.

\medskip
\noindent
 Since $i_k \leq i_{k-1}+3$, Claim~\ref{clm:006} implies that:
 \begin{equation}
 \label{eq:B:new} 
 \Delta^{up}_k \leq \beta^{i_k - 1} \leq \beta^{i_{k-1}} \cdot \beta^2
 \end{equation}
 In the proof of Claim~\ref{clm:007}, we derived equation~\eqref{eq:008}, which gives us:
 \begin{equation}
 \label{eq:B:old}
 \Delta^T + \Delta^{down} \geq \beta^{i_{k-1}}/2
 \end{equation}
 Equations~\ref{eq:B:new},~\ref{eq:B:old} imply that $\Delta^{up}_k < 2\beta^2 \cdot (\Delta^T + \Delta^{down}) \leq 8\beta^2 \cdot (\Delta^T + \Delta^{down})$. This concludes the proof of Claim~\ref{cl:A:500} for this case.

\paragraph{Case 2:} $i_k > i_{k-1} + 3$.

\newcommand{\W}{\mathcal{W}}
\renewcommand{\t}{t^*}

\medskip
\noindent
We start by noting that the weight of a node is always less than $2$ at every time.
\begin{claim}
\label{cl:new:002}
We have: $W_v(t) < 2$ at every time $t$. 
\end{claim}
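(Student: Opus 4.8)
The plan is to prove Claim~\ref{cl:new:002} by induction over the sequence of elementary operations, showing that $W_v < 2$ holds after each one; initially $W_v = 0$ since the graph is empty. The only operations that can \emph{increase} $W_v$ are: (a) an insertion of a hyper-edge incident on $v$; (b) a call to FIX-DOWN-DIRTY($v$); and (c) a call to FIX-DOWN-DIRTY($w$) for some node $w \neq v$. Hyper-edge deletions only lower $W_v$, and every call to FIX-UP-DIRTY lowers $W_v$ too, since raising the level of an endpoint of a hyper-edge cannot lower that hyper-edge's level $\ell(e)$ and hence cannot raise its weight $w(e) = \beta^{-\ell(e)}$.

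I would dispose of (a) and (b) directly. For (a): just before the update Property~\ref{mainbody:prop:inv} holds, so $W_v < 1$ by Invariant~\ref{mainbody:inv:node}, and the newly inserted hyper-edge $e$ has $\ell(e) = \max_{u \in e}\ell(u) \ge \ell(v)$, hence $w(e) = \beta^{-\ell(e)} \le 1$; thus $W_v < 2$ afterwards. For (b): by Claim~\ref{mainbody:cl:fix:down:dirty} the subroutine leaves $v$ \Clean, and by Definition~\ref{mainbody:def:clean} a \Clean node satisfies $W_v \le 1/\beta$; moreover, moving $v$ down can only raise the weight of each hyper-edge incident on $v$, so $W_v$ climbs monotonically to its final value $\le 1/\beta < 2$ and never exceeds it during the subroutine.

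The heart of the argument is case (c). The key point is that, per the {\sc While} loop of Figure~\ref{mainbody:fig:fix:dirty}, FIX-DOWN-DIRTY is invoked only when no {\sc Up-Dirty} node exists; in particular $v$ is not {\sc Up-Dirty} at that instant, so $W_v < 1$. It then remains to bound the increase of $W_v$ caused by $w$'s descent. Every hyper-edge whose weight changes during this descent is incident on $w$, so the collection of these hyper-edges is contained in $E_w$ and their total weight, evaluated after the descent, is at most $W_w$. Since Claim~\ref{mainbody:cl:fix:down:dirty} makes $w$ \Clean, we have $W_w \le 1/\beta$ after the descent, so $W_v$ increases by at most $1/\beta$; using monotonicity of each affected hyper-edge's weight again, $W_v$ stays below $1 + 1/\beta < 2$ throughout. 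This completes the induction.

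I expect case (c) to be the only real obstacle: a priori, when $w$ drops many levels at once, unboundedly many hyper-edges incident on $v$ could gain weight simultaneously, so any per-hyper-edge estimate is hopeless. The two structural facts that rescue us are exactly the ones used above --- that the algorithm never calls FIX-DOWN-DIRTY while an {\sc Up-Dirty} node is present (pinning $W_v < 1$ at that moment), and that FIX-DOWN-DIRTY restores the strong \Clean bound $W_w \le 1/\beta$ on the node it processes (so only a tiny total weight is pushed onto $v$'s incident hyper-edges). The remaining details --- how a hyper-edge's level, and hence weight, reacts to one endpoint moving --- are routine.
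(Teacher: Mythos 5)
Your proof is correct and follows essentially the same route as the paper's: both rest on the two structural facts that a hyper-edge insertion or a FIX-DOWN-DIRTY call can only occur when no node (in particular $v$) is {\sc Up-Dirty}, which pins $W_v < 1$ at that moment, and that the resulting increase of $W_v$ in one such event is strictly less than $1$ (you bound it by $1/\beta$ via the \Clean guarantee; the paper states the looser bound of $1$). Your split of the FIX-DOWN-DIRTY case into $w=v$ and $w\neq v$, and the monotonicity remark for intermediate time-steps, are fine refinements but do not change the argument's essence.
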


\begin{proof}
The crucial observation is that fixing an {\sc Up-Dirty} node $u$ never increases the weight of any node.  Furthermore,  a {\sc Down-Dirty} node gets fixed only if no other node is {\sc Up-Dirty} (see Figure~\ref{mainbody:fig:fix:dirty}).

In the beginning of time-step $t = 0$, the input graph is empty, and we clearly have $W_v(t) = 0 < 1$. By induction, suppose that $W_v(t) < 1$ in the beginning of  some time-step $t$.  Now, during time-step $t$, the weight $W_v$ can increase only if one of the following events occur: 
\begin{itemize}
\item (a) A hyper-edge containing $v$ gets inserted into the graph. This increase the value of $W_v$ by at most one. Thus, we have $W_v(t+1) < 2$. 
\item (b) We call the subroutine FIX-DOWN-DIRTY$(u)$ for some node $u$. Note that  fixing a {\sc Down-Dirty} node $u$ can increase the weight $W_u$ by at most one, and hence this can increase the weight of a neighbour of $u$  also by at most one. It again follows that $W_v(t+1) < 2$. 
\end{itemize}
If $W_v(t+1) < 1$, then we are back in the same situation as in time-step $t$. Otherwise, if $1 \leq W_v(t+1) < 2$, then $v$ is {\sc Up-Dirty} in the beginning of time-step $t+1$. In this case,  no {\sc Down-Dirty} node gets fixed (and no hyper-edge gets inserted) until we ensure that $W_v$ becomes smaller than one. Hence, the value of $W_v$ always remains smaller than $2$.
\end{proof}

\begin{claim}
\label{cl:new:simple}
We have: $W_{v \to (i_{k} - 1)} (t_k) > 1/\beta$.
\end{claim}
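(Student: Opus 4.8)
The plan is to read the claim off directly from the greedy rule inside FIX-UP-DIRTY, without any real computation. Recall that the up-jump at time $t_k$ is triggered by a call to FIX-UP-DIRTY($v$) made while $\ell(v) = i_{k-1}$, and that this subroutine moves $v$ to the \emph{smallest} index $j \in \{i_{k-1}+1, \ldots, L\}$ for which $W_{v\to j}(t_k) \leq 1/\beta$. Since the call results in $v$ landing at level $i_k$, this minimality says exactly that every level $j$ with $i_{k-1} < j < i_k$ satisfies $W_{v\to j}(t_k) > 1/\beta$.

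The one thing to verify is that $i_k - 1$ is such an intermediate level. We are in Case 2, so $i_k > i_{k-1} + 3$; in particular $i_k - 1 > i_{k-1} + 2 > i_{k-1}$, while trivially $i_k - 1 < i_k$. Hence $i_{k-1} < i_k - 1 < i_k$, so applying the observation above with $j = i_k - 1$ yields $W_{v\to (i_k - 1)}(t_k) > 1/\beta$, which is the claim.

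I do not expect any genuine obstacle here: the statement is an immediate structural consequence of the ``stop at the first level whose weight drops to $1/\beta$'' behaviour of FIX-UP-DIRTY, and the only hypothesis actually used is that $i_k$ exceeds $i_{k-1}$ by enough (Case 2's bound $i_k > i_{k-1} + 3$, though $i_k \geq i_{k-1} + 2$ would already suffice) so that $i_k - 1$ lies strictly between the old and new levels. I therefore anticipate a two- or three-line proof.
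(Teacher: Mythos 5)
Your proof is correct and is essentially the paper's argument: the paper's entire proof is ``while making the up-jump at time $t_k$, the node $v$ does not stop at level $i_k - 1$,'' which is precisely your observation that $i_k$ is the \emph{smallest} admissible level, so every intermediate level including $i_k - 1$ must have had weight exceeding $1/\beta$. You simply spell out the one sanity check the paper leaves implicit (that $i_k - 1$ lies strictly above $i_{k-1}$ and is thus in the search range, guaranteed by Case 2's hypothesis $i_k > i_{k-1} + 3$).
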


\begin{proof}
While making the {\sc Up-Jump} at time $t_k$, the node $v$ does not stop at level $i_k -1$.  The claim follows.
\end{proof}

\begin{claim}
\label{cl:new:simple:1}
We have: $W_{v \to (i_k - 3)}^+(t_k) > 1/(2\beta)$.
\end{claim}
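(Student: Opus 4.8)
The plan is to lower-bound $W^+_{v \to (i_k - 3)}(t_k)$ by comparing it with $W_{v \to (i_k-1)}(t_k)$, which Claim~\ref{cl:new:simple} already guarantees exceeds $1/\beta$, and to show that the loss incurred in passing from the latter quantity to the former is at most $2/\beta^2 < 1/(2\beta)$.

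First I would group the hyper-edges of $E_v$ by the value of $\ell_v(e)$ at time $t_k$. A hyper-edge $e \in E_v$ contributes $\beta^{-\max(\ell_v(e),\, i_k-1)}$ to $W_{v\to(i_k-1)}(t_k)$, and it contributes $\beta^{-\ell_v(e)}$ to $W^+_{v\to(i_k-3)}(t_k)$ if $\ell_v(e) \ge i_k-2$ and $0$ otherwise. Since $\ell_v(e) \le \max(\ell_v(e), i_k-1)$, each hyper-edge with $\ell_v(e) \ge i_k-2$ contributes at least as much to $W^+_{v\to(i_k-3)}(t_k)$ as it does to $W_{v\to(i_k-1)}(t_k)$, whereas each hyper-edge with $\ell_v(e) \le i_k-3$ contributes exactly $\beta^{-(i_k-1)}$ to $W_{v\to(i_k-1)}(t_k)$ and $0$ to $W^+_{v\to(i_k-3)}(t_k)$. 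Writing $M := \bigl|\{e\in E_v : \ell_v(e)\le i_k-3\}\bigr|$, this gives
\[
W^+_{v\to(i_k-3)}(t_k) \;\ge\; W_{v\to(i_k-1)}(t_k) - \beta^{-(i_k-1)}\cdot M \;>\; \tfrac{1}{\beta} - \beta^{-(i_k-1)}\cdot M,
\]
the last step being Claim~\ref{cl:new:simple}.

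The crux is to show $M < 2\beta^{i_k-3}$, i.e.\ that $v$ cannot face too many ``low'' hyper-edges; this is where the hypothesis of Case~2 enters. Since $i_k > i_{k-1}+3$, the node $v$ sits at level $\ell(v) = i_{k-1} \le i_k-4$ at time $t_k$, so every hyper-edge $e$ with $\ell_v(e)\le i_k-3$ has level $\ell(e) = \max(\ell_v(e), i_{k-1}) \le i_k-3$ at time $t_k$, hence weight $w(e) \ge \beta^{-(i_k-3)}$. Summing over all such hyper-edges gives $W_v(t_k) \ge \beta^{-(i_k-3)}\cdot M$, and Claim~\ref{cl:new:002} bounds $W_v(t_k) < 2$, so $M < 2\beta^{i_k-3}$. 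Substituting back, $\beta^{-(i_k-1)}\cdot M < 2/\beta^2$, and since $\beta = 17 > 4$ we have $2/\beta^2 < 1/(2\beta)$; therefore $W^+_{v\to(i_k-3)}(t_k) > 1/\beta - 1/(2\beta) = 1/(2\beta)$, as required.

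The main obstacle is precisely the bound on $M$: a priori $v$ could face arbitrarily many low-level hyper-edges, which would make the comparison in the second paragraph vacuous. The point of Case~2 is that a large jump $i_k > i_{k-1}+3$ keeps $v$ far below level $i_k-3$, so each low hyper-edge is forced to carry weight at least $\beta^{-(i_k-3)}$; then the universal cap $W_v < 2$ from Claim~\ref{cl:new:002} converts into the count bound $M < 2\beta^{i_k-3}$, and the numerical slack between $2/\beta^2$ and $1/(2\beta)$ is comfortably supplied by the choice $\beta = 17$. Everything else is the elementary regrouping of the weight sums in the second paragraph.
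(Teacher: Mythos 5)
Your proof is correct and takes essentially the same route as the paper's: the paper argues by contradiction, supposing $W^+_{v\to(i_k-3)}(t_k) \le 1/(2\beta)$ and deducing $W_{v\to(i_k-1)}(t_k) \le 1/(2\beta) + 2/\beta^2 \le 1/\beta$ to contradict Claim~\ref{cl:new:simple}, whereas you run the identical decomposition directly as a lower bound on $W^+_{v\to(i_k-3)}(t_k)$. Both use the same three ingredients --- the lower bound $W_{v\to(i_k-1)}(t_k) > 1/\beta$ from Claim~\ref{cl:new:simple}, the cap $W_v(t_k) < 2$ from Claim~\ref{cl:new:002} to bound the mass of low hyper-edges, and the numerical slack from $\beta = 17$ --- so this is a cosmetic, not substantive, difference.
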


\begin{proof}
Suppose that the claim does not hold. Then we get:
\begin{equation}
\label{eq:009}
W_{v\to (i_k - 1)}(t_k) ~~ \leq ~~ W^+_{v\to (i_k - 3)}(t_k) + \frac{W_{v\to (i_k - 3)}(t_k) - W^+_{v\to (i_k - 3)}(t_k)}{\beta^2} ~~\leq~~ 1/(2\beta) + 2/\beta^2 \leq 1/\beta 
\end{equation}
The first inequality holds since the weights of the hyper-edges $e \in E_v$ with $\ell_v(e) \leq i_k - 3$ get scaled by at least a factor of $1/\beta^2$ when $v$ moves from level $i_k - 3$ to $i_k - 1$, and the rest 
can only go down. The second inequality  holds since $W_{v\to (i_k - 3)}(t_k) \leq W_{v \to i_{k-1}}(t_k) < 2$ by Claim~\ref{cl:new:002} and the assumption $W^+_{v\to (i_k - 3)}(t_k) \leq 1/(2\beta)$.
The last inequality holds since $\beta = 17$ by equation~\eqref{eq:parameter}. However, equation~\eqref{eq:009} contradicts Claim~\ref{cl:new:simple}.
\end{proof}

Claim~\ref{cl:new:simple:1} states that $W^+_{v\to (i_k - 3)}(t_k) > 1/(2\beta)$. Since  $i_{k-1} < i_k - 3$, equation~\eqref{eq:007} implies  that $W^+_{v\to (i_k - 3)}(t_1) \leq W^+_{v\to i_{k - 1}}(t_1)\leq W^+_{v \to i_{1}}(t_1) \leq 1/\beta^2$.
Thus during the time-interval $[t_1,t_k]$, the value of $W^+_{v\to (i_k - 3)}$ increases by at least $1/(2\beta) - 1/\beta^2$. This increase can occur in three ways:
(1) a hyper-edge $e \in E_v$ is inserted with $\ell_v(e) > i_k - 3$ before the {\sc Up-Jump} at time $t_k$
(which contributes to $\Delta^T$), (2) some hyper-edge $e \in E_v$ gains weight due to a {\sc Down-Jump} of some node (say) $u \in e, u \neq v$,  and $\ell_v(e) > i_k - 3$ after the {\sc Down-Jump} (which contributes to $\Delta^{down}$), and (3) some hyper-edge $e \in E_v$ had $\ell_v(e) \le i_k - 3$ at $t_1$ but $\ell_v(e) > i_k - 3$ at $t_k$. Note that the total weight of the hyper-edges of type (3) at time $t_1$ incident on $v$ at level $i_k - 3$ is at most  $1/\beta$; this follows from \eqref{eq:007}. Therefore, when $\ell_v(e)$ for such an edge $e$ raises to $> i_k - 3$, the weight decreases by at least a $1/\beta$ factor. Hence the total increase in $W^+_{v \to (i_k - 3)}$
due to type (3) hyper-edges is at most $1/\beta^2$, and the weight increase of at least $1/(2\beta) - 2/\beta^2$ must come from hyper-edges of type (1) and type (2). However each such hyper-edge $e$ can contribute at most $\beta^{-(i_k - 2)}$ 
to the weight (since  $\ell_v(e) > i_k - 3$). And therefore, we get (recall that $\beta = 17$ by equation~\eqref{eq:parameter}):
\[
\left(\Delta^T + \Delta^{down}\right)\cdot \beta^{-(i_k - 2)} \geq \frac{1}{2\beta} - \frac{2}{\beta^2} > \frac{1}{8\beta} ~~~~ \textrm{ implying } ~~~~ \left(\Delta^T + \Delta^{down}\right) \ge \frac{\beta^{i_k - 1}}{8\beta^2}
\]
Claim~\ref{clm:006} gives $\Delta^{up}_k \leq \beta^{i_k - 1}$, and therefore we get: 
\begin{equation}\label{eq:case2}
\Delta^{up}_k \leq 8\beta^2\left(\Delta^T + \Delta^{down}\right)
\end{equation}
This concludes the proof of Claim~\ref{cl:A:500} for this case.

  \subsection{Proof of Lemma~\ref{lm:counter:time}}
  \label{sec:lm:counter:time}
  For technical reasons, we assume that we end with the empty graph as well. This is without loss of generality due to the following reason. Suppose we made $T$ updates and the current graph is $G$. At this point, the graph has $T' \leq T$ edges.
  Suppose the time taken by our algorithm till now is $T_1$. Now delete all the $T'$ edges, and let the time taken by our algorithm to take care of these $T'$ updates be $T_2$. If $T_1 + T_2 = \Theta(f^2(T+T')) = \Theta(f^2T)$, then $T_1 = \Theta(T)$ as well.
  Therefore, we assume we end with an empty graph.
  
    	When a hyper-edge $e$ is inserted into or deleted from the graph, we take $O(f)$ time to update the relevant data structures for its $f$ endpoints. The rest of the time is spent in implementing the {\sc While} loop in Figure~\ref{mainbody:fig:fix:dirty}. 
    	We take care of the two subroutines separately.
    	
    		\smallskip
    		\noindent {\em Case 1.}  The subroutine FIX-DOWN-DIRTY($v$) is called which moves the node $v$ from level $i$ to level $j < i$ (say). 
    		We need to account for the time to find the relevant index $j$ and the time taken to update the relevant data structures.
    		By Lemma~\ref{lm:time:move:down}, the time taken for the latter is proportional $\Theta(f\cdot \Delta C^{down})$. Further, the value of  $C^{up}$ remains unchanged. 
    		For finding the index $j < i$, it suffices to focus on the edges $E_{v,i} = \{e\in E_v: \ell_v(e) \le i\}$ since these are the only edges that change weight as $v$ goes down. Therefore, this takes time $\Theta(|\{e\in E_v: \ell_v(e) \le i\}|)$.
    		Since each of these edges had $w(e) = \beta^{-i}$ and since $W_v \le \frac{1}{\alpha\beta^2}$ before the FIX-DOWN-DIRTY($v$) call, we have $|\{e\in E_v: \ell_v(e) \le i\}| \le \beta^{i-2}/\alpha$ which is precisely $\Delta I^{down}$.
    		Therefore, the time taken to find the index $j$ is $\Theta(\Delta I^{down})$.
    		
  	\smallskip
  	\noindent {\em Case 2.} The subroutine FIX-UP-DIRTY($v$) is called which moves the node $v$ from level $i$ to level $j > i$, say. 
  	Once again, we need to account for the time to find the relevant index $j$ and the time taken to update the relevant data structures, and once again by Lemma~\ref{lm:time:move:up} the time taken for the latter is $\Theta(f\cdot \Delta C^{up})$. Further, the value of $C^{down}$ remains unchanged. 
  	We now account for the time taken to find the index $j$. 
  	\begin{claim}\label{ckm:hillary}
  		$j$ can be found in time $\Theta(j-i)$.
  	\end{claim}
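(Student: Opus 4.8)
The plan is to implement the search for $j$ as an upward scan over the candidate levels $k = i+1, i+2, \ldots$, maintaining the value $W_{v\to k}(t)$ as $k$ increases and stopping at the first $k$ with $W_{v\to k}(t) \le 1/\beta$; by the definition of FIX-UP-DIRTY this $k$ equals $j$, and (as already noted in the description of the subroutine) the scan is guaranteed to halt by level $L$ since $W_{v\to L}(t) \le n^f \beta^{-L} \le 1/\beta$. This scan performs exactly $j-i$ iterations, and since it must at least touch levels $i+1,\dots,j$ it also needs $\Omega(j-i)$ time, so the claim reduces entirely to showing that each iteration runs in $O(1)$ time.

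For the $O(1)$-per-iteration bound, the first step is the recurrence relating consecutive values. When $v$ rises from level $k-1$ to level $k$, exactly the hyper-edges $e \in E_v$ with $\ell_v(e) \le k-1$ change weight, each dropping from $\beta^{-(k-1)}$ to $\beta^{-k}$, so $W_{v\to k}(t) = W_{v\to (k-1)}(t) - N_{k-1}\cdot(\beta^{-(k-1)} - \beta^{-k})$, where $N_{k-1} := |\{e \in E_v : \ell_v(e) \le k-1\}|$; the scan is initialised with $W_{v\to i}(t) = W_v(t)$, which the data structure already stores (since $v$ is at level $i$, moving it to level $i$ changes nothing). The one substantive point is that the counts $N_{k-1}$ must be available cheaply, and here I would use that $v$ currently sits at level $i = \ell(v)$: every $e \in E_v$ then has $\ell(e) = \max(i, \ell_v(e)) \ge i$, which yields the equivalence $\ell_v(e) \le k-1 \iff \ell(e) \le k-1$ for all $k \ge i+1$, and hence $N_{k-1} = \sum_{m=i}^{k-1} |E_{v,m}|$, a prefix sum of the per-level hyper-edge counters $|E_{v,m}|$ that our data structure maintains. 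So I would carry a running total $N$, set to $|E_{v,i}|$ before computing $W_{v\to(i+1)}(t)$ and incremented by $|E_{v,k}|$ at the end of iteration $k$, together with the running scalar $\beta^{-(k-1)}$ (just divide by $\beta$ each step); each iteration is then one counter lookup plus a constant number of arithmetic operations, i.e.\ $O(1)$ time, giving total time $\Theta(j-i)$.

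The only real obstacle I anticipate is verifying the bookkeeping identity $N_{k-1} = \sum_{m=i}^{k-1} |E_{v,m}|$ cleanly: one has to notice that $\ell_v(e)$ and $\ell(e)$ can genuinely differ for hyper-edges whose level is exactly $i$, and then check that this discrepancy is harmless because any such edge has $\ell_v(e) \le i \le k-1$ regardless. Once that is pinned down, the rest is a routine incremental-evaluation argument, and the $\Theta(j-i)$ running time for locating $j$ follows immediately.
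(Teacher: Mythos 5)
Your proposal takes the same approach as the paper: both evaluate $W_{v\to k}(t)$ incrementally for $k = i, i+1, i+2, \ldots$ using the stored per-level counters $|E_{v,m}|$, stopping at the first $k$ with $W_{v\to k}(t) \le 1/\beta$, which gives $\Theta(j-i)$ iterations at $O(1)$ each. In fact, your recurrence is the more careful one. The paper's displayed step
\[
W_{v\to(k+1)}(t) = W_{v\to k}(t) - \Bigl(1-\tfrac{1}{\beta}\Bigr)\,|E_{v,k}|\,\beta^{-k}
\]
uses only $|E_{v,k}|$ (with $E_{v,k}$ as stored while $v$ sits at level $i$), but the true drop in weight from level $k$ to $k+1$ is $\bigl(1-\tfrac{1}{\beta}\bigr)\beta^{-k}\cdot\bigl|\{e\in E_v : \ell_v(e)\le k\}\bigr|$, i.e.\ the coefficient should be the prefix sum $\sum_{m=i}^{k}|E_{v,m}|$ rather than the single counter $|E_{v,k}|$. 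Your running total $N$, initialised to $|E_{v,i}|$ and incremented by $|E_{v,k}|$ at each step, is exactly the right bookkeeping and keeps each iteration $O(1)$. So the proposal is correct, matches the paper's intent, and quietly repairs a small slip in the paper's stated recurrence.
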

  	\begin{proof}
  		To see this note that for $k\geq i$, 
  		\[
\textstyle   		W_{v\to k}(t) =  \sum_{\ell \geq k} \sum_{e\in E_{v,\ell}} w(e) + \sum_{\ell < k} \frac{1}{\beta^{k-\ell}} \sum_{e\in E_{v,\ell}} w(e)
  		\]
  		since (a) edges not incident on $v$ are immaterial, (b) the edges incident on $v$ whose levels are already $\geq k$ do not change their weight, and (c) edges whose levels are $\ell < k$ have their weight go from $\beta^{-\ell}$ to $\beta^{-k}$.
  		The above implies that for $k\geq i$, 
  		\[
\textstyle   		W_{v\to (k+1)} (t) = W_{v\to k} (t) - \left(1-\frac{1}{\beta}\right)\sum_{e\in E_{v,k}} w(e) = W_{v\to k} (t) - \left(1-\frac{1}{\beta}\right)|E_{v,k}|\cdot \beta^{-k}
  		\]
  		That is, $W_{v\to(k+1)}$ can be evaluated from $W_{v\to k}(t)$ in $\Theta(1)$ time since we store $|E_{v,k}|$ in our data structure. The claim follows.
  	\end{proof}
  	
  	Note that the LHS of Claim~\ref{ckm:hillary} can be as large as $\Theta(\log n)$. To account for the movement, we again fix a vertex $v$ and a phase $\Phi_v$ where the level of $v$ changes from $i_1$ to say $i_k$. The total time for finding indices is $\Theta(i_k - i_1)$.
  	After this, there must be a DOWN-JUMP due to a call to FIX-DOWN-DIRTY($v$) since the final graph is empty. Thus, we can charge the time taken in finding indices in this phase $\Phi_V$ to $\Delta I^{down}$ in the FIX-DOWN-DIRTY($v$) call right at the end of this phase.
  	We can do so since $\Delta I^{down} = \beta^{i_k - 2}/\alpha = \frac{1}{f^2} \Theta(i_k)$ since $\beta = \Theta(1)$ and $\alpha = \Theta(f^2)$ by \eqref{eq:parameter}. Therefore, the total time taken to find indices in the FIX-UP-DIRTY($v$) calls in all is at most $f^2 I^{down}$.
\smallskip

\noindent
In sum, the total time taken to initialize update data structures is at most $\Theta\left(f\cdot \left(C^{up} + C^{down} + T\right)\right)$ and the total time taken to find indices is at most $\Theta(f^2\cdot I^{down})$. This proves Lemma~\ref{lm:counter:time}.

\bibliographystyle{abbrv}
\bibliography{paper}

\end{document}